\begin{document}
\title{No Invariant Perfect Qubit Codes}

\author{Refik Mansuroglu \orcidlink{0000-0001-7352-513X}}
\email[]{Refik.Mansuroglu@fau.de}
\affiliation{Department of Physics, Friedrich-Alexander-Universität Erlangen-Nürnberg (FAU), Staudtstraße 7, 91058 Erlangen}
\author{Hanno Sahlmann \orcidlink{0000-0002-8083-7139}}
\email[]{Hanno.Sahlmann@gravity.fau.de}
\affiliation{Department of Physics, Friedrich-Alexander-Universität Erlangen-Nürnberg (FAU), Staudtstraße 7, 91058 Erlangen}

\date{\today}

\begin{abstract}
    Perfect tensors describe highly entangled quantum states that have attracted particular attention in the fields of quantum information theory and quantum gravity. In loop quantum gravity, the natural question arises whether SU(2) invariant tensors, which are fundamental ingredients of the basis states of spacetime, can also be perfect. In this work, we present a number of general constraints for the layout of such invariant perfect tensors (IPTs) and further describe a systematic and constructive approach to check the existence of an IPT of given valence. We apply our algorithm to show that no qubit encoding of valence \rema{6} can be described by an IPT and close a gap to prove a no-go theorem for invariant perfect qubit encodings. \rema{We also provide two alternative proofs for the non-existence of 4-valent qubit IPTs which has been shown in \cite{HIGUCHI2000213,Li_2017}.}
\end{abstract}

\maketitle

\section{Introduction}
Perfect tensors are powerful structures that encode highly entangled states of many-body quantum systems. Their defining property is that any reduced density matrix of at most half of the system is maximally mixed. As such they describe \emph{absolutely maximally entangled} states \cite{Helwig:2012nha}. 
Perfect tensors have been studied in a broad spectrum of fields. A system of qubits, for instance, can be encoded by a perfect tensor with maximal code distance, in the context of quantum error correction. Perfect tensors have also been shown to \rema{reproduce features of} the AdS/CFT correspondence \cite{Pastawski_2015, Bhattacharyya16} and illustrate the Ryu-Takayanagi formula \cite{Ryu06} emerging from many-body entanglement. This way, erasure of quantum information can be reconstructed and a quantum error correcting code can be built \cite{Almheiri_2015, ChunJun22}. \rema{Random tensor networks with large bond dimension indicate similar features \cite{Hayden_2016}.}
What makes perfect tensors particularly interesting in the context of holographic duality is the fact that they represent quantum channels of maximal chaos \cite{Hosur_2016} making them suitable candidates for representing the holographic dual to the bulk quantum state of a black hole \cite{Maldacena_2016}. In all of these applications, perfect tensors appear as components of a tensor network.

From the perspective of quantum gravity, holography and the entanglement/geometry correspondence are two important open questions \cite{deBoer22}, and the models referred to contain very important insights. At the same time, they are somewhat ad hoc. It is therefore important to note that tensor networks also play an important role in the definition of quantum states in a quantum theory of gravity derived from first principles. In loop quantum gravity \cite{Thiemann:2007pyv,Ashtekar:2011ni}, special tensor networks represent basis states that diagonalize certain aspects of spatial geometry. As such, the tensors of these \emph{spin networks}  \cite{penrose1,Baez:1994hx} represent quantum states of geometric polyhedra \cite{Bianchi:2010gc}. 

\rema{The bond dimensions in spin networks are not necessarily uniform. They corresponds to geometric areas of the faces of polyhedra dual to the network. While we will mostly consider tensors with physical dimension 2 -- corresponding to spin 1/2 -- in the present work, higher and possibly non-uniform dimensions would also be physically relevant.\footnote{\rema{While in the path integral approach to LQG, the limit of large dimensions is often associated with the classical limit \cite{Barrett:2009mw}, this is not unequivocal. In the canonical approach, low spins play an important role. They make up the bulk of the states in the state counting for black hole entropy \cite{Ashtekar:2000eq}, for example, with large spins exponentially suppressed \cite{Domagala:2004jt}.}}}

Spin networks and their linear combinations have been studied under the aspect of entanglement entropy \cite{Bianchi:2018fmq} and holography \cite{Han_2017, Colafranceschi_2022, Colafranceschi22}. It has been shown in \cite{Han_2017} that the area operator of loop quantum gravity can be related to holographic entanglement entropy, thus recovering the Ryu-Takayanagi formula. 

The limits to the existence of perfect tensors have been investigated in the absence of additional symmetries \cite{Scott04, HIGUCHI2000213, GISIN19981} (see also \cite{Huber_2018, Huber_table} for an illustrative overview). Spin network states are more restrictive in a sense that they come with additional constraints on the tensor network states describing quantum geometry. 
The tensors of the spin networks have to be invariant under an action of SU(2) on all tensor factors. In the geometric picture, this invariance corresponds to the closure of the quantized polyhedra. Hence, a pressing question is the existence of invariant perfect tensors (IPTs). This would allow to study geometric correlates of perfectness and give the models for AdS/CFT a geometric interpretation using more than the the network topology. Previous work has investigated IPTs  for qudits (arbitrary local dimension $d$) and showed that every 3-valent Wigner 3j-symbol is an IPT \cite{Li_2017}. Apart from that, no more IPTs have been found. In fact, it has been shown that there are \rema{no 4-valent IPTs \cite{HIGUCHI2000213,Li_2017}}, but for valence $n \geq 5$ random invariant tensors are concentrated around perfect tensors \cite{Li_2018}.

In this work, we introduce a systematic method to find IPTs using a decomposition into special coupling schemes that we call bridge states. For the derivation we make use of Penrose's binor calculus \cite{Penrose71, Kauffman02}, a graphical notation which is particularly useful when dealing with spin network states \cite{Pietri_1997, Mansuroglu_2021}. We apply the proposed algorithm to give an additional proof for non-existence of 4-valent IPTs for qubit encoding, thereby reproducing a result of \cite{Li_2017}. We also apply the algorithm to 6-valent qubit encodings and show that also those hexagonal invariant perfect structures are forbidden. We conclude with a no-go theorem for arbitrary qubit encodings making use of the Scott bound \cite{Scott04}. Moreover, we put general constraints on the layout of IPTs with even valence: All indices are restricted to correspond to the same spin representation. A weaker restriction also holds for odd valence.

We structure the paper as follows. After defining the important notions, we use representation theory arguments to put restrictions to the layout of general IPTs and use that to prove (again, see  \cite{Li_2017}) a no-go theorem for 4-valent qubit in section \ref{sec:limits}. In section \ref{sec:main_master}, we derive a master equation for the coefficients determining an IPT of given valence and solve it for valence four and six in section \ref{sec:nogo}. With this, we present a second proof for the no-go theorem of 4-valent qubit encodings and a new no-go theorem of 6-valent qubit encodings which imply a no-go theorem for generic valence.

\rema{\emph{Note added in proof}: After publication of this work as a preprint and submission to JHEP, an independent work that proves the non-existence of perfect SU(2) invariant qubit codes \cite{Bernards_PHD} has been brought to our attention. Meanwhile it is available as a pre-print \cite{Bernards22}. The work rules out the perfectness of Werner states, i.e., qubit codes that are $U(d)$ invariant up to phase, where $d$ is the local dimension. $U(d)$ invariance up to phase is a stronger assumption for higher $d$, but for qubits their statements coincide with our finding that there are no SU(2) invariant perfect codes.}

\section{Basic notions and Concepts}
Let us start with the introduction of conventions and notions we need for the further study. We freely switch between considering the same object $I$ as a linear map $I:V_1\otimes \ldots \otimes V_k\rightarrow V_{k+1}\otimes\ldots\otimes V_n$, a state in $V_1^*\otimes \ldots \otimes V^*_k\otimes V_{k+1}\otimes\ldots\otimes V_n$ and its tensor representation $I^{a_1\ldots a_m}{}_{a_{k+1}\ldots a_n}$. We will furthermore not distinguish between $V$ and $V^*$ and pull indices with $\delta$.

All vector spaces will carry representations of $SU(2)$. We denote by $j$ both, a spin label $j\in \mathbb{N}/2$ and the corresponding $2j+1$-dimensional irrep of $SU(2)$.

The first important notion is that of an invariant tensor, also denoted intertwiner or equivariant map in the literature. 
\begin{defi}
    A tensor $I: j_1 \otimes ... \otimes j_k \to j_{k+1} \otimes ... \otimes j_{n}$ is called invariant, if it satisfies
    \begin{align}
        \tensor{g}{_{a_1}^{b_1}} \cdots \tensor{g}{_{a_k}^{b_k}} &\tensor{I}{^{a_1 ... a_k}_{a_{k+1} ... a_n}} \tensor{g}{^{a_{k+1}}_{b_{k+1}}} \cdots \tensor{g}{^{a_n}_{b_n}} \nonumber \\
        = &\tensor{I}{^{b_1 ... b_k}_{b_{k+1} ... b_n}}
    \end{align}
    for $g \in \textrm{SU(2)}$.
\end{defi}
Note that invariant tensors $I$ exist if and only if a coupling scheme to the trivial representation $j=0$ exists, i.e. $I: j_1 \otimes ... \otimes j_n \to 0$. Next, we define the notion of bipartition of tensors.
\begin{defi}
    Let $I$ be any tensor with coefficients $\tensor{I}{^{a_1 ... a_k}_{b_{k+1} ... b_n}}$ and $\mathcal{I} = \{1, ..., n\}$ its index set. A bipartition is a pair $(A, B)$ of ordered tuples $A, B \subseteq \mathcal{I}$, such that as sets $A\dot{\cup} B = \mathcal{I}$. 
    
    Compositions of two tensors $I, J$ with bipartitions $(A, B)$ and $(B, C)$, respectively, are understood in the following way
    \begin{align}
        \tensor{(JI)}{^{a_1 ... a_k}_{c_{k+1} ... c_n}} = \tensor{I}{^{a_1 ... a_k}_{b_{k+1} ... b_n}} \tensor{J}{^{b_1 ... b_k}_{c_{k+1} ... c_n}}
    \end{align}
\end{defi}
Hence, a bipartition can be thought of as the distinction between input and output of the tensor interpreted as a multilinear function. 
If a tensor is a partial isometry in all possible bipartitions that allow an embedding from input to output, i.e. $|A| \leq |B|$, or formally:
\begin{defi}
    A tensor $I: j_1 \otimes ... \otimes j_n \rightarrow 0$ is called perfect, if for any bipartition $(A, B)$ with $|A| \leq |B|$ there exists $\lambda \in \mathds{R}$, s.t. $I^\dagger I = \lambda \mathds{1}$.
\end{defi}

\section{Necessary Conditions on Invariant Perfect Tensors}
\label{sec:limits}
The requirement of perfectness puts a number of restrictions on invariant tensors. For every bipartition to be a partial isometry, we can study the action of an invariant tensor on irreducible subspaces of SU(2) using Schur's Lemma repeatedly.

\begin{prop}
\label{prop:irred_perfect}
    Let $I: j_1 \otimes ... \otimes j_n \rightarrow 0$ be a non-trivial, $n$-valent invariant, perfect tensor. For any bipartition $(A, B)$ with $|A| \leq |B|$, the following statements hold
    \begin{enumerate}
        \item $I$ maps irreducible spin $j$ subspaces to spin $j$ subspaces.
        \item $I\rvert_{j} = \mu_{j} \mathds{1} $ with $\mu_{j} \in \mathds{C}$ and $|\mu_{j}|^2 = |\mu_{j'}|^2 = \lambda(A) \quad \forall j, j'$
    \end{enumerate}
    These statements are depicted in the following diagram that shows the mapping of $I$ reduced to the irreducible subspaces.
    
    \begin{align}
        \begin{matrix}
            \sum_{i \in A} j_i & \xrightarrow{\mu_{j_{max}}} & \sum_{i \in B} j_i \\
            \sum_{i \in A} j_i - 1 
            & \xrightarrow{\mu_{j_{max} - 1}}
            & \sum_{i \in B} j_i - 1 \\
            & \qquad\qquad &\\
            ... && ... \\
            &&\\
            \min_{\vec k} |\sum_{i \in A} (-1)^{k_i} j_i | & \xrightarrow{\mu_{j_{min}}} & \min_{\vec k} |\sum_{i \in A} (-1)^{k_i} j_i |
        \end{matrix}
    \end{align}
    
    \begin{proof}
        \begin{enumerate}
            \item This is a direct consequence of Schur's lemma.
            \item Also by Schur's lemma, the action of $I$ reduced to an irreducible subrepresentation labelled by a spin quantum number $j$ is equal to $\mu_j \mathds{1}$ with $\mu_j \in \mathds{C}$. 
            
            For perfectness, we need $I^\dagger I = \lambda(A) \mathds{1}$ for every bipartition $\mathcal{I} = A \uplus B$, which translated into $|\mu_j|^2 = |\mu_{j'}|^2 =: \lambda \, \forall j, j'$ and for any bipartition $\mathcal{I} = A \uplus B$.
        \end{enumerate}
    \end{proof}
\end{prop}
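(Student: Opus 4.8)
The plan is to reduce both claims to Schur's lemma applied leg by leg. The first step is to observe that an $SU(2)$-invariant tensor $I\in j_1\otimes\dots\otimes j_n$, read through a bipartition $(A,B)$, is exactly an $SU(2)$-intertwiner $I_A\colon\bigotimes_{i\in A}j_i\to\bigotimes_{i\in B}j_i$: under the isomorphism $j_1\otimes\dots\otimes j_n\cong\mathrm{Hom}\!\left(\bigotimes_{i\in A}j_i,\bigotimes_{i\in B}j_i\right)$ --- which is equivariant because every $SU(2)$-irrep is self-dual, so that the identification of $V$ with $V^*$ fixed in our conventions may be taken equivariant --- invariant vectors correspond precisely to equivariant maps, and the hypothesis $|A|\le|B|$ is what lets $I_A$ be a candidate partial isometry. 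Decomposing domain and codomain into irreducibles by iterated Clebsch--Gordan and applying Schur componentwise gives statement~1 at once: $I_A$ vanishes between isotypic components of unequal spin, hence carries the spin-$j$ subspace of the domain into the spin-$j$ subspace of the codomain. By the standard description of iterated Clebsch--Gordan, the spins occurring in $\bigotimes_{i\in A}j_i$ run from $j_{\max}=\sum_{i\in A}j_i$ downward in integer steps to $j_{\min}=\min_{\vec k}\bigl|\sum_{i\in A}(-1)^{k_i}j_i\bigr|$, which is the left column of the diagram.

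For statement~2 I would bring in perfectness. In the multiplicity-free case Schur forces $I_A|_j=\mu_j\mathds{1}$ with a single scalar $\mu_j\in\mathds{C}$; perfectness says $I_A^\dagger I_A=\lambda(A)\mathds{1}$ on all of $\bigotimes_{i\in A}j_i$, and restricting this identity to the spin-$j$ subspace yields $|\mu_j|^2=\lambda(A)$ for every occurring $j$, hence $|\mu_j|^2=|\mu_{j'}|^2$; the common value may depend on the bipartition, whence the notation $\lambda(A)$. Non-triviality of $I$ rules out $\lambda(A)=0$, so every $\mu_j$ is nonzero; in particular $I_A$ is injective, every spin of the domain also occurs in the codomain, and each spin-$j$ block of the domain is mapped isometrically onto its image in the spin-$j$ part of the codomain up to the common factor $\sqrt{\lambda(A)}$ --- which is what the diagram records.

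The step that needs real care, and the one I expect to be the main obstacle, is multiplicities. Once $|A|\ge 3$ the Clebsch--Gordan decomposition of $\bigotimes_{i\in A}j_i$ is generically not multiplicity-free --- already three spin-$\tfrac12$ legs produce spin $\tfrac12$ with multiplicity two, a configuration that shows up in the $6$-valent qubit analysis later on --- so $I_A|_j$ is a priori a linear map on a multiplicity space rather than a scalar. The fix is to fix orthonormal bases of intertwiners, write $I_A|_j=M_j\otimes\mathds{1}$ for a matrix $M_j$ acting on the multiplicity space, and note that $I_A^\dagger I_A=\lambda(A)\mathds{1}$ then forces $M_j^\dagger M_j=\lambda(A)\mathds{1}$, so that $M_j$ is $\sqrt{\lambda(A)}$ times a partial isometry; interpreting $\mu_j$ in this way recovers statement~2, with the scalar version as its multiplicity-free specialization. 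I would state this caveat explicitly rather than suppress it.
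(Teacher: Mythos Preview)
Your argument is correct and follows the same route as the paper: both parts are obtained by viewing $I$ through the bipartition as an $SU(2)$-intertwiner, applying Schur's lemma to the isotypic decomposition, and then reading off $|\mu_j|^2=\lambda(A)$ from the perfectness condition $I^\dagger I=\lambda(A)\mathds{1}$. Your treatment is in fact more careful than the paper's: the paper's proof simply asserts $I|_j=\mu_j\mathds{1}$ without addressing the case where a given spin occurs with multiplicity in $\bigotimes_{i\in A}j_i$, whereas you correctly flag this and explain that in general one obtains $M_j\otimes\mathds{1}$ with $M_j^\dagger M_j=\lambda(A)\mathds{1}$ on the multiplicity space --- a point that is genuinely relevant already for the $6$-valent qubit case treated later.
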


From the $SU(2)$ invariance of $I$, we have assured that every spin $j$ subspace in the domain is also found in the image of $I$ as long as $|A| \leq |B|$. If we consider the case of even valence $2n$ and the balanced bipartitions, i.e. $|A| = |B|$, we can derive limitations on the spin quantum number.
\begin{prop}
\label{prop:even_n}
    Let $I$ be an invariant, perfect tensor with even valence $2n$. Then all ingoing spins are the same, i.e. $I : j \otimes ... \otimes j \to 0$
    \begin{proof}
        We will apply Lemma \ref{prop:irred_perfect} repeatedly for all balanced bipartitions, i.e. $|A| = |B|$. Since, we need to map the maximally coupled spin onto a subspace of equal dimension, we can infer the inequality 
        \begin{align}
            \sum_{i \in A} j_i \leq \sum_{i \in B} j_i,
        \end{align}
        for a fixed bipartition $A, B$. As we consider a balanced bipartition, we can also look at the bipartition, where $A$ and $B$ are swapped. This makes the inequality an equality
        \begin{align}
            \sum_{i \in A} j_i = \sum_{i \in B} j_i.
            \label{eq:irred_jmax}
        \end{align}
        Additionally, we can permute indices through $A$ and $B$ and get the same equation \eqref{eq:irred_jmax}. One can easily show by induction that \eqref{eq:irred_jmax} implies $j_i = j_k \forall i, k \in \mathcal{I}$.
    \end{proof}
\end{prop}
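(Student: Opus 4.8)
The plan is to combine the structural information of Lemma~\ref{prop:irred_perfect} with a short combinatorial argument about balanced bipartitions. The one ingredient I need from the lemma is this: in any bipartition $(A,B)$ with $|A|\le|B|$, a non-trivial invariant perfect tensor maps each irreducible spin-$j$ summand of the domain isometrically into the codomain, so every spin occurring in $\bigotimes_{i\in A} j_i$ must also occur in $\bigotimes_{i\in B} j_i$. Applied to the top spin, whose value in $\bigotimes_{i\in A} j_i$ is $\sum_{i\in A} j_i$ and in $\bigotimes_{i\in B} j_i$ is $\sum_{i\in B} j_i$, this forces $\sum_{i\in A} j_i \le \sum_{i\in B} j_i$ whenever $|A|\le|B|$.

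First I would specialise to even valence $2n$ and to \emph{balanced} bipartitions, $|A|=|B|=n$. For such a partition both $(A,B)$ and $(B,A)$ satisfy the hypothesis $|A|\le|B|$, so the inequality above holds with $A$ and $B$ interchanged as well, and hence
\begin{align}
    \sum_{i\in A} j_i = \sum_{i\in B} j_i \qquad \text{for every balanced bipartition } (A,B).
\end{align}

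Next I would turn this family of equalities into the pointwise statement $j_i = j_k$ for all $i,k\in\mathcal{I}$. Fix $i\neq k$. If $n=1$ the partition $A=(i)$, $B=(k)$ gives $j_i=j_k$ directly. If $n\ge 2$, split the remaining $2n-2$ indices into two blocks $A_0,B_0$ of size $n-1$ and set $A=A_0\cup\{i\}$, $B=B_0\cup\{k\}$, so the displayed equality reads $j_i+\sum_{A_0} j = j_k+\sum_{B_0} j$. Then swap the two singled-out indices, i.e. take the (again balanced) partition $A'=A_0\cup\{k\}$, $B'=B_0\cup\{i\}$, giving $j_k+\sum_{A_0} j = j_i+\sum_{B_0} j$. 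Subtracting the two identities cancels the $A_0,B_0$ sums and yields $j_i-j_k = j_k-j_i$, hence $j_i=j_k$. Since $i,k$ were arbitrary, all ingoing spins coincide. (This replaces the induction sketched in the text by a single swap; either route works.)

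I do not expect a genuine obstacle: the whole content sits in Lemma~\ref{prop:irred_perfect}. The one point that deserves care is the justification that a balanced bipartition may be read in both directions — this is precisely what fails for odd valence, where no bipartition is balanced, and is the reason the conclusion must weaken there. One should also keep in mind that invoking Lemma~\ref{prop:irred_perfect} presupposes $I$ is non-trivial, the trivial tensor being excluded (equivalently, treated as a vacuous case).
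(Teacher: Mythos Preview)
Your argument is correct and follows essentially the same route as the paper: use Lemma~\ref{prop:irred_perfect} on balanced bipartitions to get $\sum_{i\in A} j_i=\sum_{i\in B} j_i$, then deduce equality of all spins. The only difference is cosmetic --- you spell out the combinatorial step via a single swap of two indices between $A$ and $B$, whereas the paper leaves it as ``one can easily show by induction''.
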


We can also make a statement about the spin quantum numbers of invariant, perfect tensors of odd valence $2n+1$ which is, however, less restrictive.
\begin{prop}
\label{prop:odd_n}
    Let $I$ be an invariant, perfect tensor with odd valence $2n+1$. Then the spin quantum numbers of the ingoing edges fulfill
    \begin{align}
        \sum_{i =1}^n j_i \leq \sum_{i =n+1}^{2n+1} j_i 
        \label{eq:odd_n}
    \end{align}
    where we sorted the spins in decreasing order, i.e. $j_1 \geq ... \geq j_{2n+1}$ for all bipartitions $\mathcal{I} = A \uplus B$, with $|A| = n$ and $|B| = n+1$.
    \begin{proof}
        The proposition is a direct application of Lemma \ref{prop:irred_perfect} with the bipartition where $A$ contains the $n$ largest spins.
    \end{proof}
\end{prop}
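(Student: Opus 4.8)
The plan is to read the inequality off directly from Lemma~\ref{prop:irred_perfect}, applied to one carefully chosen bipartition; no new idea beyond that lemma should be needed. Sort the spins so that $j_1 \geq j_2 \geq \dots \geq j_{2n+1}$ and take $A = (1, \dots, n)$ and $B = (n+1, \dots, 2n+1)$. Then $|A| = n < n+1 = |B|$, so the hypothesis $|A| \leq |B|$ of Lemma~\ref{prop:irred_perfect} is satisfied. Interpreting $I$ as the linear map $\bigotimes_{i \in A} j_i \to \bigotimes_{i \in B} j_i$ attached to this bipartition, the first item of the lemma says that $I$ carries the irreducible spin-$j$ subspace of the domain into a spin-$j$ subspace of the codomain.

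Next I would make sure the image of the maximal-spin coupling channel is nonzero. By the second item of Lemma~\ref{prop:irred_perfect}, for a non-trivial perfect tensor one has $|\mu_j|^2 = \lambda(A)$ for every spin $j$ appearing in the domain; if $\mu_{j_{\max}}$ vanished, then $\lambda(A) = 0$ and hence every $\mu_j = 0$, i.e.\ $I = 0$, contradicting non-triviality. Therefore $I$ restricted to the maximal coupled spin $j_{\max} = \sum_{i \in A} j_i$ is a nonzero scalar multiple of an embedding of the spin-$j_{\max}$ irrep into $\bigotimes_{i \in B} j_i$. In particular the irrep $j_{\max}$ must occur in the Clebsch--Gordan decomposition of $\bigotimes_{i \in B} j_i$, whose largest component is $\sum_{i \in B} j_i$, which forces
\begin{align}
    \sum_{i \in A} j_i \leq \sum_{i \in B} j_i .
\end{align}
With the chosen $A$ and $B$ this is exactly \eqref{eq:odd_n}.

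Finally, because $A$ was selected to gather the $n$ largest spins, $\sum_{i\in A} j_i$ is the maximum of $\sum_{i \in S} j_i$ over all $n$-element index subsets $S$; repeating the argument for any other bipartition $(A', B')$ with $|A'| = n$, $|B'| = n+1$ only produces the weaker inequality $\sum_{i\in A'} j_i \leq \sum_{i\in B'} j_i$, so the sorted version stated in the proposition is the sharpest instance. I do not anticipate a genuine obstacle: everything is contained in Lemma~\ref{prop:irred_perfect}, and the one place that needs a line of care is excluding $\mu_{j_{\max}} = 0$, since only then can one conclude that the spin-$j_{\max}$ irrep actually appears on the $B$-side. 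In contrast to the even-valence case (Lemma~\ref{prop:even_n}), there is no swap $A \leftrightarrow B$ available to promote the inequality to an equality, which is the structural reason the odd-valence statement is strictly weaker.
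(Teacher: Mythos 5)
Your proof is correct and follows the same route as the paper, namely applying Lemma~\ref{prop:irred_perfect} to the bipartition in which $A$ collects the $n$ largest spins and reading off that the maximal coupled spin $\sum_{i\in A} j_i$ must fit inside the decomposition of $\bigotimes_{i\in B} j_i$. Your explicit check that $\mu_{j_{\max}}\neq 0$ (via the perfectness requirement $|\mu_j|^2=\lambda(A)$ for all $j$) makes rigorous a step the paper leaves implicit, and your closing remark about the absence of the $A\leftrightarrow B$ swap correctly identifies why the odd-valence conclusion is weaker than Lemma~\ref{prop:even_n}.
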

Note that \eqref{eq:odd_n} also implies inequalities of the same form but with arbitrary permutations of indices in $A$ and $B$, as long as $|A| +1 = |B|$. We conclude the section with a number of examples and remarks.

\begin{exa}
\begin{enumerate}
    \item The invariant tensor $I: \frac{1}{2} \otimes \frac{1}{2} \otimes \frac{1}{2} \otimes \frac{1}{2} \otimes 2 \to 0$ cannot be perfect by Lemma \ref{prop:odd_n}.
    \item The HaPPY code \cite{Pastawski_2015} is constructed from a perfect tensor with valence $n=6$. However, it is not invariant.
    \item If an invariant, perfect tensor encodes qubits only ($j_i = \frac{1}{2} \, \forall i$), it needs to have even valence. 
    
    Vice versa if an invariant, perfect tensor has even valence and shall encode at least one qubit, i.e. $j_i = \frac{1}{2}$ for one $i$, then $j_k = \frac{1}{2} \, \forall k$.
\end{enumerate}
\end{exa}
We want to apply Lemma \ref{prop:even_n} to the special case $n = 4$ and $j= \frac{1}{2}$ and reproduce a result of \cite{Li_2017} with a different proof.
\begin{theo}
\label{theo:no_4v_I}
    There are no non-trivial invariant, perfect tensors $I$ with valence $n=4$ and $j=\frac{1}{2}$.
\begin{proof}[Proof by contradiction (Sketch)]
    By Lemma \ref{prop:irred_perfect}, we know that the eigenvalues of an invariant, perfect tensor $I$ can only differ by a phase, i.e. $I$ takes the form
    \begin{align}
        &I^{abcd} = \sqrt{\lambda} (e^{i \phi_s} \epsilon^{a(c} \epsilon^{d)b} + e^{i \phi_a} \epsilon^{a[c} \epsilon^{d]b})
        \label{eq:no_4v_I}
    \end{align}
    for the bipartition $(ab) \to (cd)$. Repeating this argument for the repartitions $(ad) \to (cb)$ and $(ac) \to (bd)$ gives us similar expressions for $I$ with potentially different phases. Comparing the three versions of \eqref{eq:no_4v_I} yields two contradicting constraints for the phase difference $\Delta \phi = \phi_a - \phi_s$.
\end{proof}
\end{theo}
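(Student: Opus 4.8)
The plan is to use that the space of $4$-valent invariant qubit tensors is only two-dimensional, write down the most general such $I$ explicitly, and then show that perfectness in the three balanced bipartitions $(ab)\to(cd)$, $(ac)\to(bd)$, $(ad)\to(bc)$ is self-contradictory.

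First I would fix a basis of $\mathrm{Inv}(\tfrac12^{\otimes4})$. Since $\tfrac12^{\otimes4}$ contains the trivial representation with multiplicity two, every invariant $I$ is a linear combination of the three $\epsilon$-pairings $E_1=\epsilon^{ab}\epsilon^{cd}$, $E_2=\epsilon^{ac}\epsilon^{bd}$, $E_3=\epsilon^{ad}\epsilon^{bc}$, which obey the single binor (Schouten) identity $E_1-E_2+E_3=0$; hence $I=xE_1+yE_3$ with $(x,y)\neq(0,0)$. Next I would spell out Lemma~\ref{prop:irred_perfect} in the channel $(ab)\to(cd)$. Binor manipulations give $\epsilon^{a(c}\epsilon^{d)b}=-\tfrac12(E_2+E_3)$ and $\epsilon^{a[c}\epsilon^{d]b}=-\tfrac12E_1$, and a short check shows these are, up to a common normalization, the partial isometries onto the spin-$1$ and spin-$0$ subspaces of $V_a\otimes V_b$ respectively; therefore Lemma~\ref{prop:irred_perfect} is exactly the statement \eqref{eq:no_4v_I}, which in coordinates reads $|2x-y|=|y|$ together with $e^{i\Delta\phi}=(2x-y)/y$ and $\Delta\phi=\phi_a-\phi_s$.

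Then I would transport this constraint to the other two bipartitions by relabelling legs. The transposition $b\leftrightarrow c$ turns the channel $(ac)\to(bd)$ into $(ab)\to(cd)$ and, via the same Schouten identity, sends the coordinate pair $(x,y)$ to $(x,\,x-y)$; the $3$-cycle $(b\,c\,d)$ does the same for $(ad)\to(bc)$ and sends $(x,y)$ to $(y-x,\,-x)$. Re-imposing \eqref{eq:no_4v_I} in each channel then produces the two further equations $|x+y|=|x-y|$, i.e.\ $\Real(x\bar y)=0$, and $|x-2y|=|x|$, i.e.\ $\Real(x\bar y)=|y|^2$; combining them gives $y=0$, and then $|2x-y|=|y|$ forces $x=0$, contradicting $I\neq0$. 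In the phase language of the sketch this is the same thing: eliminating $x$ via $x=\tfrac12(e^{i\Delta\phi}+1)\,y$ collapses the last two equations to $e^{i\Delta\phi}=-1$ and $e^{i\Delta\phi}=+1$, the two incompatible conditions on $\Delta\phi$. The degenerate cases $x=0$ or $y=0$ set aside in the last step are disposed of directly: such an $I$ is a single $\epsilon$-pairing, hence a rank-one singlet projector in one of the channels and not a partial isometry there.

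The part that demands care is not the final algebra — that is a one-line computation on the unit circle — but the middle bookkeeping: checking that $\epsilon^{a(c}\epsilon^{d)b}$ and $\epsilon^{a[c}\epsilon^{d]b}$ really do carry the common normalization that makes \eqref{eq:no_4v_I}, with a single prefactor $\sqrt\lambda$, equivalent to perfectness, and keeping track of the signs the $\epsilon$-tensors acquire under the leg relabellings when each channel is rewritten in the fixed basis $\{E_1,E_3\}$. With one orientation convention for $\epsilon$ fixed at the outset, these are routine.
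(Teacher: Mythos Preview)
Your argument is correct. The high-level strategy is the same as the paper's first proof: impose Lemma~\ref{prop:irred_perfect} in all three balanced channels and extract contradictory constraints on $\Delta\phi$. The execution, however, is genuinely different and cleaner. The paper (Appendix~\ref{sec:no_4v_irred}) re-parametrizes $I$ with fresh phases $(\psi_s,\psi_a)$, $(\tilde\psi_s,\tilde\psi_a)$ in each channel and then compares representations by evaluating $\braket{I|I}$ across the two forms; this produces a ``walk in the complex plane'' yielding only interval constraints $\Delta\phi\in[-2x,2x]$ versus $\Delta\phi\in[\pi-2x,\pi+2x]$ with $x=\cos^{-1}(7/9)$, whose disjointness then has to be argued. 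You instead fix a single basis $\{E_1,E_3\}$ once, push the leg relabellings through the Schouten identity to get explicit linear maps on $(x,y)$, and read off three quadratic equalities that collapse immediately to $x=y=0$. This avoids the auxiliary phases and the interval estimate entirely, and in the phase language your output $e^{i\Delta\phi}=\pm1$ is sharper---indeed it coincides with what the paper's \emph{second} proof (Appendix~\ref{sec:no_4v}, via the bridge-state master equation) eventually finds, but you get there with nothing beyond binor algebra. Your caveat about the common normalization in \eqref{eq:no_4v_I} is well placed; once one checks that $\epsilon^{a(c}\epsilon^{d)b}$ and $\epsilon^{a[c}\epsilon^{d]b}$ are, up to the same scalar, the spin-$1$ and spin-$0$ partial isometries, your condition $|2x-y|=|y|$ is exactly perfectness in the $(ab)\to(cd)$ channel, and the rest is the short computation you describe.
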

See appendix \ref{sec:no_4v_irred} for the full proof. While the arguments of Theorem \ref{theo:no_4v_I} can be generalized to higher valences and spins, it quickly gets combinatorically complicated. We will present a simpler systematic approach to get sufficient conditions for invariant perfect tensors.

\section{Master Equation for Invariant Perfect Tensors}
\label{sec:main_master}
A more down to earth approach to formulate sufficient conditions for an invariant, perfect tensor is achieved by decomposing tensor networks into \textit{bridge states}, coupling schemes\footnote{For more information on SU(2) coupling schemes see for example \cite{Wigner1993,Brunnemann:2004xi}.} that additionally contain the information of a bipartition. The following discussion is focused on a tensor of even valence and $j=\frac{1}{2}$, but it can be readily generalized to higher spins and odd valence. The only change is increase of combinatorial complexity \rema{(see Remark \ref{rem:generalize} for a sketch)}. Let us start by defining bridge states
\begin{defi}[Bridge State]
    \label{def:bridge}
    Let $n\in \mathds{N}$ and $(j_a)_{a \in \{1, ..., n_1\}}, (k_b)_{b \in \{1, ..., n_2\}}$ lists of spin quantum numbers where $n_1+n_2 = n$ and $j_{n_1} = k_{n_2}$. We define the SU(2) invariant bridge state associated with $j$ and $k$ as the following coupling scheme
    \begin{align}
        \label{eq:bridge_state}
        \curlyvee_{(j, k)} = \frac{1}{2}^{\otimes n_1}\vstate{j}{k}\frac{1}{2}^{\otimes n_2}
    \end{align}
\end{defi}
Since an invariant coupling exists only for certain data $n_1, n_2, j$ and $k$, bridge states exist only for certain spin data as well. Implicit in \eqref{eq:bridge_state} is a certain normalisation. For details see the Appendix \ref{sec:master}.

We can directly infer a number of helpful relations between bridge states and invariant tensors 
\begin{rem}
\begin{enumerate}
    \item The bridge state $\curlyvee_{(j, k)}$ exists, if and only if $k_1 = j_1 = \frac{1}{2}$ and $j_a = |j_{a-1} \pm \frac{1}{2}| \, \forall a$, and analogously for $k_a$
    \item Every bridge state implicitly encodes a bipartition of indices. For valence $2n$ and $n_1 = n_2 = n$, for instance, we get the balanced bipartition (cf. section \ref{sec:limits})
    \item \rema{As every bridge state is also a coupling scheme \cite{Wigner1993}, the set of invariant bridge states for fixed valence and bipartition $n_1, n_2$ represents a basis on the space of invariant tensors. }
\end{enumerate}
\end{rem}
With this, we can decompose any invariant tensor $I$ into bridge states $\curlyvee_{(j,k)}$ in which a bipartition is implicitly fixed in the ordering of $(j,k)$
\begin{align}
    I = \sum_{\substack{j,k \\ j_{n_1} = k_{n_2}}} c_{j,k} \curlyvee_{(j,k)}.
    \label{eq:I_dec}
\end{align}
In the following, we will drop the addendum $j_{n_1} = k_{n_2}$ and assume the coefficients $c_{j,k}$ to be zero if this is not satisfied. To derive a necessary and sufficient condition for perfectness of $I$, we need to translate the isometry relation
\begin{align}
    I^\dagger I = \lambda \mathds{1}
    \label{eq:perfectness}
\end{align}
into a set of equations for the coefficients $c_{j,k}$. To write down those equations, we still need the notion of theta networks
\begin{defi}[Theta Network]
\label{def:theta}
    Let $2n\in 2\mathds{N}$ 
    and $(j_a), (k_b), \, a,b \in \{1, ..., n\}$ lists of spin quantum numbers where $j_1 = k_1$ and $j_n = k_n$. The theta-network associated with $j$ and $k$ is defined as
    \begin{align}
        _j\Theta_k = \thetaNet{j}{k},
    \end{align}
    where the inner edges all carry spin $\frac{1}{2}$.
\end{defi}
Since the theta network has no open ends, it evaluates to a number, and we will denote this number as $_j\Theta_k$ in the following.
\begin{theo}
\label{theo:master}
    Let $I$ be an invariant tensor and $c_{j,k}$ coefficients denoting its bridge state decomposition for a fixed bipartition via \eqref{eq:I_dec}. $I$ is a perfect tensor, if and only if for every bipartition $A$ the coefficients satisfy
    \begin{align}
        \sum_k c_{j,k} \overline{c_{j',k}} \hspace{0.5em} {}_k\Theta_k = \lambda(A) \delta_{j,j'} \frac{(2j_{n_1} + 1)^2}{_j\Theta_j},
        \label{eq:master}
    \end{align}
    where $\lambda(A)\in \mathds{R}$ might be different for each bipartition $A$.
    \begin{proof}[Proof (Sketch)]
        The equations \eqref{eq:master} result from a comparison of coefficients of the two sides of \eqref{eq:perfectness}. The left hand side requires the contraction of $I$ with its own adjoint $I^\dagger$, which can be calculated using orthogonality relations of bridge states and identities from the binor calculus \cite{Kauffman02} (cf. appendix \ref{sec:master}). For the right hand side, a decomposition of the identity operator $\mathds{1}$ into bridge states is done (cf. Lemma \ref{prop:1_dec}). 
    \end{proof}
\end{theo}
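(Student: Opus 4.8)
The plan is to convert the isometry condition \eqref{eq:perfectness} into an identity between $\mathrm{SU}(2)$-invariant operators on a single tensor factor and then read off \eqref{eq:master} by comparing coordinates in a bridge-state basis. So the first step is to fix a bipartition $(A,B)$ with $|A|=n_1\le n_2=|B|$ and expand $I$ in the bridge states adapted to it as in \eqref{eq:I_dec}. Since $I$ is invariant, $I^\dagger I$ is an equivariant endomorphism of $\frac{1}{2}^{\otimes n_1}$; as bridge states of a fixed valence and bipartition span the invariant tensors of that type (cf.\ the remark following Definition~\ref{def:bridge}), the bridge states of valence $2n_1$ with the balanced cut form a basis of the space of such operators, and Lemma~\ref{prop:1_dec} supplies the expansion of the identity in this basis. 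Because coordinates in a basis are unique, \eqref{eq:perfectness} for a fixed bipartition is equivalent to matching the $\curlyvee_{(j,j')}$-coordinates of $I^\dagger I$ with those of $\lambda(A)\mathds{1}$, and requiring this for every bipartition with $|A|\le|B|$ is exactly perfectness; this single observation delivers both directions of the ``if and only if'' at once.

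The substance lies in the diagrammatic evaluation of $I^\dagger I=\sum_{j,k,j',k'} c_{j,k}\,\overline{c_{j',k'}}\,\curlyvee_{(j',k')}^\dagger\curlyvee_{(j,k)}$ in that basis. Contracting the two bridge states over the $n_2$ legs of $B$ glues the $B$-side coupling trees of $\curlyvee_{(j,k)}$ and $\curlyvee_{(j',k')}$; by orthogonality of coupling schemes on a fixed tree this vanishes unless $k=k'$, and the bridge edge then forces $j_{n_1}=j'_{n_1}$. The closed network that remains on the $B$-side evaluates to the scalar ${}_k\Theta_k$, while what is still open on the $A$-side, after composing the two bridge edges through the identity on spin $j_{n_1}$, is proportional to $\curlyvee_{(j,j')}$ with a coefficient $\kappa(j_{n_1})$ independent of the $c$'s. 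Carrying this reduction out with the bridge-state normalization and the binor-calculus identities of Appendix~\ref{sec:master} yields
\begin{align*}
    I^\dagger I = \sum_{j,j'}\Bigl(\sum_k c_{j,k}\,\overline{c_{j',k}}\,{}_k\Theta_k\Bigr)\,\kappa(j_{n_1})\,\curlyvee_{(j,j')},
\end{align*}
and comparing the coefficient of $\curlyvee_{(j,j')}$ here with that of $\lambda(A)\mathds{1}$ from Lemma~\ref{prop:1_dec} — which is diagonal, producing the $\delta_{j,j'}$ — collapses to \eqref{eq:master}.

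The main obstacle is the bookkeeping of normalizations rather than anything conceptual. One has to propagate the normalization implicit in \eqref{eq:bridge_state}, the squared norms of the coupling trees (which reassemble precisely into the theta networks ${}_k\Theta_k$ and ${}_j\Theta_j$), and the loop and bubble coefficients of the binor calculus, so that $\kappa(j_{n_1})$ together with the coefficients supplied by Lemma~\ref{prop:1_dec} combines to the factor $(2j_{n_1}+1)^2/{}_j\Theta_j$ on the right-hand side of \eqref{eq:master}; establishing Lemma~\ref{prop:1_dec} with a normalization compatible with \eqref{eq:bridge_state} is the other place where care is needed. The representation-theoretic skeleton — Schur's lemma and the orthogonality of coupling trees — is otherwise routine.
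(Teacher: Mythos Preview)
Your proposal is correct and follows essentially the same route as the paper: expand $I$ in bridge states, evaluate $I^\dagger I$ diagrammatically so that the $B$-side contraction enforces $k=k'$ and produces the scalar ${}_k\Theta_k$ times a bridge state $\curlyvee_{(j,j')}$ (your $\kappa(j_{n_1})$ is the paper's explicit $(2k_{n_2}+1)^{-1}$), and then compare with the bridge-state expansion of $\lambda\,\mathds{1}$ from Lemma~\ref{prop:1_dec}, whose simplification to $\delta_{j,j'}(2j_{n_1}+1)/{}_j\Theta_j$ yields \eqref{eq:master}. The only thing left implicit in your sketch is the concrete evaluation of $\kappa$ and of the identity coefficients, which the paper carries out via Lemma~\ref{prop:loopstate} and Corollary~\ref{cor:theta}; conceptually there is no difference.
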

For a detailed proof, see appendix \ref{sec:master}. We can directly solve for one of the coefficients from \eqref{eq:master} for general valence.
\begin{coro}
\label{cor:max_coeff}
    Let $I$ be an invariant partial isometry of valence $2n$ and $c_{j,k}$ the coefficients of its bridge state decomposition. It holds
    \begin{align}
        |c_{j_{max}, j_{max}}|^2 = \lambda,
    \end{align}
    where $j_{max} = \left( \frac{1}{2}, 1, ..., \frac{n}{2} \right)$.
    \begin{proof}
        This is a direct consequence of \eqref{eq:master}, when choosing $j=j'=j_{max}$.
        \begin{align}
            \sum_k c_{j_{max},k} \overline{c_{j_{max},k}} \hspace{0.5em} {}_k\Theta_k = \lambda \frac{(2j_{n} + 1)^2}{_{j_{max}}\Theta_{j_{max}}},
        \end{align}
        In order for $I$ to be invariant, only one addend with $k = j_{max}$ remains on the left hand side. Finally, using $_{j_{max}}\Theta_{j_{max}} = 2j_n + 1$ yields the statement.
    \end{proof}
\end{coro}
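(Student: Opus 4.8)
The plan is to specialise the master equation \eqref{eq:master} to the extremal index $j=j'=j_{max}$, where $j_{max}=\left(\tfrac12,1,\ldots,\tfrac n2\right)$ is the unique maximal coupling path. First I would observe that since $I$ is invariant, the bridge state $\curlyvee_{(j_{max},k)}$ can only be nonzero when $k$ couples consistently up to the same top spin at the junction, i.e.\ $k_{n_2}=j_{n_1}=\tfrac n2$; but the maximal path $j_{max}$ reaches spin $\tfrac n2$ only by the strictly increasing ladder, and on the $k$-side the analogous monotonicity forced by Remark (item 1) means the only $k$ compatible with the maximal top spin is $k=j_{max}$ as well. Hence the sum over $k$ on the left-hand side of \eqref{eq:master} collapses to the single term $|c_{j_{max},j_{max}}|^2 \, {}_{j_{max}}\Theta_{j_{max}}$.

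Next I would plug $j=j'=j_{max}$ into the right-hand side of \eqref{eq:master}: the Kronecker delta gives $1$, and the factor becomes $\lambda(A)\,(2j_{n_1}+1)^2 / \,{}_{j_{max}}\Theta_{j_{max}}$. Equating the two sides yields
\begin{align}
    |c_{j_{max},j_{max}}|^2 \, {}_{j_{max}}\Theta_{j_{max}} = \lambda \, \frac{(2j_{n_1}+1)^2}{{}_{j_{max}}\Theta_{j_{max}}},
\end{align}
so $|c_{j_{max},j_{max}}|^2 = \lambda\,(2j_{n_1}+1)^2 / \left({}_{j_{max}}\Theta_{j_{max}}\right)^2$. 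The final ingredient is the evaluation ${}_{j_{max}}\Theta_{j_{max}} = 2j_n+1 = 2j_{n_1}+1$ (the theta network on the maximal ladder, where every internal spin is determined and the network reduces to a single irreducible strand of spin $\tfrac n2$), which immediately gives $|c_{j_{max},j_{max}}|^2 = \lambda$.

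The only real subtlety—and the step I would be most careful about—is justifying that $k=j_{max}$ is the unique surviving term in the $k$-sum, together with the normalisation convention that makes ${}_{j_{max}}\Theta_{j_{max}}$ evaluate to exactly $2j_n+1$ rather than some other dimension-dependent constant. Both hinge on conventions fixed in the appendix (the normalisation implicit in \eqref{eq:bridge_state} and Definition \ref{def:theta}), so in a careful write-up I would point to those, but as a sketch the collapse of the sum is forced purely by invariance: any other $k$ either fails $k_{n_2}=\tfrac n2$ or fails the admissibility recursion $k_a=|k_{a-1}\pm\tfrac12|$ while still reaching the maximal spin. Everything else is the one-line algebra above.
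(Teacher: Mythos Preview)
Your proposal is correct and follows essentially the same route as the paper: specialise \eqref{eq:master} to $j=j'=j_{max}$, observe that invariance forces $k=j_{max}$ as the only surviving term (you even spell out the ladder argument the paper leaves implicit), and then use ${}_{j_{max}}\Theta_{j_{max}}=2j_n+1$ to finish. The only thing the paper adds is that this theta-network evaluation follows directly from the telescoping product in Corollary~\ref{cor:theta}, which would address your stated concern about the normalisation.
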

Since \eqref{eq:master} has to hold for every possible bipartition, it remains to study how these equations transform under a change of bipartition, a repartition. The repartitioned tensor $\tilde I$ also can be decomposed into bridge states with new coefficients $\tilde c_{j,k}$ that, again, have to satisfy \eqref{eq:master}. 

However, the master equation \eqref{eq:master} only needs to be solved once. Since the new coefficients of the repartitioned tensor can be related to the old coefficients by a linear operator $R$
\begin{align}
    \tilde c_{(j',k')} = \tensor{R}{_{(j',k')}^{(j,k)}} c_{(j,k)},
\end{align}
it only remains to check consistency of the solution of the master equation \eqref{eq:master} under every such transformation. In particular, every repartition $R$ that does not change the balance between the index sets $A, B$ can be written as a series of $r$ permutations of neighboring indices
\begin{align}
    R = P_{i_r, i_r+1} \cdots P_{i_1, i_1 + 1}.
\end{align}
Note that the $\{P_{i, i+1}\}$ do not necessarily commute. To check \eqref{eq:master} for every possible bipartition hence requires the exploration of the whole algebra generated by $\{P_{i, i+1}\}$, in general. We show in appendix \ref{sec:repartitions} that some repartitions are trivial in the sense that, if $I$ is an IPT then also the repartitioned tensor $\tilde I$ is an IPT. In particular, every unbalancing repartition which moves an index from the domain into the image space is trivial (cf. Lemma \ref{prop:trvial_I}). Also balanced repartitions that do not mix indices of the domain or image are trivial (cf. Lemma \ref{prop:trvial_II}). 

It thus remains to investigate the permutation of the neighboring indices across the bridge $P^*$ (cf. Lemma \ref{prop:non_triv_repart})
\begin{align}
    \bipartstate{$I$}
\end{align}
and all combinations of trivial repartitions with $P^*$. We summarize the systematics in the following algorithm
\begin{alg}[Finding an IPT]
\label{alg:IPT}
    For a given valence $2n$ the set of IPTs encoding qubits ($j=\frac{1}{2}$) can be found the following way
    \begin{enumerate}
        \item Solve the master equations \eqref{eq:master} for one specific bipartition and parameterize the general solution into free amplitudes $A_1, A_2, ...$ and phases $\phi_1, \phi_2, ...$
        \item Calculate the basis transformation $P^*$ that yields the new coefficients $\tilde c$ corresponding to the repartitioned tensor which has the neighboring indices flipped across the bridge
        \item Parameterize the $\tilde c$ the same way as in 1. using new, independent parameters $\tilde A_1, \tilde A_2, ..., \tilde \phi_1, \tilde \phi_2, ...$
        \item Implement constraints on said parameters by identifying $\tilde c = P^* c$
        \item Repeat 2.~-- 4. with a repartition exchanging the last index in the domain with the second last index ($P_{n+1, n+2} P^* P_{n+1, n+2}$), third last index ($P_{n+2, n+3} P_{n+1, n+2} P^* P_{n+1, n+2} P_{n+2, n+3}$) etc.
        \item Repeat 2. -- 4. with the mirroring repartition that exchanges all domain and image indices
    \end{enumerate}
\end{alg}
In the end, Algorithm \ref{alg:IPT} requires at most $n+1$ repartitions. If the solution set becomes empty or only contains trivial solutions before every repartition has been checked, the algorithm terminates as no IPT exists.

Note that with Algorithm \ref{alg:IPT}, it is trivial to see that every 3-valent invariant tensor is also perfect. Since the tensor is unique there is no mixing of basis states. The perfectness relation can be calculated in two lines with the binor calculus (cf. Lemma \ref{prop:loopstate} for spin $\frac{1}{2}$ case).

\section{Two No-Go Theorems for valence Four and Six}
\label{sec:nogo}
We have already shown in Theorem \ref{theo:no_4v_I}, that there are no invariant perfect tensors with valence $2n=4$ and spin $j= \frac{1}{2}$. This can also be shown in the bridge state decomposition using Algorithm \ref{alg:IPT}. We present the full proof in appendix \ref{sec:no_4v} and sketch the most important arguments here.
\begin{proof}[Second Proof for Theorem \ref{theo:no_4v_I} (Sketch)]
    Assume we had found an IPT $I$. For a fixed bipartition, we can fix the coefficients of the bridge state decomposition of $I$ up to two phases (one relative phase)
    \begin{align}
        |c_0|^2 = \frac{\lambda}{4}, \qquad |c_1|^2 = \lambda.
    \end{align}
    While the mirroring repartition is trivial, the repartition $P^*$ fixes the relative phase to $\Delta \phi = 0$. The last repartition $P_{3, 4} P^* P_{3,4}$, however, would fix the relative phase of the initial coefficients to $\Delta \phi = \pi$, a contradiction.
\end{proof}

\rema{Note that 4-valent perfect tensors have been already ruled out without imposing SU(2) invariance in \cite{HIGUCHI2000213}.} One is tempted to go one step further and study the second smallest non-trivial tensor for spin $\frac{1}{2}$ particles which is a 6-valent tensor. We will show in an analogous way that also those hexagonal structures cannot be perfect and $SU(2)$ invariant at the same time. As before, a detailed proof is presented in appendix \ref{sec:no_6v} and a sketch of the proof is given here.
\begin{theo}
\label{theo:no_6v}
    There are no non-trivial invariant, perfect tensors $I$ with valence $2n=6$ and $j=\frac{1}{2}$.
    \begin{proof}[Proof (Sketch)]
        Assume we had found an IPT $I$. For a fixed bipartition, the master equation \eqref{eq:master} is solved leaving one free amplitude $A \in \mathds{R}^+$ and four phases $\phi, \rho, \chi, \psi \in [0,2\pi)$
        \begin{widetext}
            \begin{align}
                c_{(1,\frac{3}{2}, 1)} = \sqrt{\lambda} e^{i \phi} \quad
                c_{(1,\frac{1}{2}, 1)} = A e^{i \rho} \quad
                c_{(1,\frac{1}{2}, 0)} = \sqrt{\frac{\lambda}{3} - \frac{3}{4} A^2 } \, e^{i    \chi} \quad
                c_{(0,\frac{1}{2}, 0)} = \frac{3}{4} A e^{i\psi} \quad
                c_{(0,\frac{1}{2}, 1)} = \sqrt{\frac{\lambda}{3} - \frac{3}{4} A^2} e^{i(\rho + \psi - \chi + \pi)}
            \end{align}
        \end{widetext}
        The repartition $P^*$ reduces the degrees of freedom by fixing $A = \frac{2}{3} \sqrt{\lambda}$ and $\rho = \phi$. Furthermore, the phase $\chi$ does not appear anymore, since the coefficients of asymmetric bridge states are necessarily zero $c_{(0,\frac{1}{2}, 1)} = c_{(1,\frac{1}{2}, 0)} = 0$. Finally, the repartition acting on the second last index on the right hand side $P_{4, 5} P^* P_{4,5}$ projects the solution space to $\lambda = 0$ which is a trivial solution.
    \end{proof}
\end{theo}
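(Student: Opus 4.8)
The plan is to apply Algorithm \ref{alg:IPT} systematically to the case $2n = 6$, $j = \frac{1}{2}$, and track how each successive repartition shrinks the solution space until only the trivial solution $\lambda = 0$ survives. First I would set up the bridge state decomposition for a fixed balanced bipartition $n_1 = n_2 = 3$: by the remarks following Definition \ref{def:bridge}, the admissible coupling chains on three $\frac{1}{2}$-edges are $(\tfrac12, 1, \tfrac32)$, $(\tfrac12, 1, \tfrac12)$, $(\tfrac12, 0, \tfrac12)$ as written (with the constraint $j_{n_1} = k_{n_2}$ pairing the right endpoints), so the nonzero coefficients are exactly the five listed in the theorem statement: $c_{(1,\frac32,1)}$, $c_{(1,\frac12,1)}$, $c_{(1,\frac12,0)}$, $c_{(0,\frac12,0)}$, $c_{(0,\frac12,1)}$. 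Then I would evaluate the relevant theta networks ${}_k\Theta_k$ and ${}_j\Theta_j$ via the binor calculus (the same evaluations used in Corollary \ref{cor:max_coeff}), substitute into the master equation \eqref{eq:master}, and solve. Corollary \ref{cor:max_coeff} already pins $|c_{(1,\frac32,1)}|^2 = \lambda$; the remaining master equations (the diagonal $j = j'$ constraints for $j \in \{(1,\frac12),(0,\frac12)\}$ and the off-diagonal $j \neq j'$ constraint coupling them) leave precisely one free amplitude $A \in \mathds{R}^+$ and four free phases, reproducing the parameterization quoted in the statement, including the fixed phase relation $\rho + \psi - \chi + \pi$ forced on $c_{(0,\frac12,1)}$ by the off-diagonal equation.

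Next I would compute the repartition operator $P^*$ that flips the two neighboring indices straddling the bridge. Concretely, $P^*$ acts on a bridge state by inserting a crossing (a binor $\epsilon$-contraction) at the central edge; recoupling this crossing back into the standard bridge basis is a finite $6j$-symbol computation for $\mathrm{SU}(2)$ spins no larger than $\tfrac32$, which I would carry out graphically using the binor identities of \cite{Kauffman02}. This yields the matrix $\tensor{R}{_{(j',k')}^{(j,k)}}$ for $P^*$. Parameterizing the repartitioned coefficients $\tilde c$ by an independent set of parameters $\tilde A, \tilde\phi, \dots$ (again a solution of the same master equation, since the repartitioned tensor must also be an IPT), and imposing $\tilde c = P^* c$, I expect the identification to force $A = \tfrac{2}{3}\sqrt{\lambda}$ and $\rho = \phi$, and to kill the asymmetric-bridge coefficients: $c_{(0,\frac12,1)} = c_{(1,\frac12,0)} = 0$, which in turn removes $\chi$ from the description. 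At that point the surviving data are $\lambda$, $A = \tfrac23\sqrt\lambda$, and the phases $\phi$, $\psi$. I would also check, via Lemmas \ref{prop:trvial_I} and \ref{prop:trvial_II}, that the mirroring repartition (swapping all domain and image indices) and the unbalancing repartitions impose no new constraint, so they can be skipped.

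Finally I would apply the last nontrivial repartition from step 5 of Algorithm \ref{alg:IPT}, namely $P_{4,5}\, P^*\, P_{4,5}$, which flips the second-to-last index on the image side. Computing its action on the already-constrained coefficients and demanding consistency with a fresh solution of the master equation should force $\lambda = 0$, i.e. the tensor is trivial — contradicting the assumption that $I$ is a non-trivial IPT. I expect the main obstacle to be step two: correctly computing the $P^*$ recoupling matrix and the theta-network normalizations in the binor conventions, keeping track of the signs and factors of $(-1)$ that Penrose's calculus introduces, since a single sign error there propagates into the phase constraints and can spuriously open or close the solution space. The bookkeeping of which phases remain genuinely free after each identification is the second delicate point; it is essential to introduce truly independent parameters for $\tilde c$ at each stage and only then impose $\tilde c = P^* c$, rather than assuming the functional form is preserved. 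These are exactly the computations deferred to appendix \ref{sec:no_6v}.
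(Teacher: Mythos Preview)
Your proposal is correct and follows essentially the same route as the paper's proof in Appendix~\ref{sec:no_6v}: solve the master equation \eqref{eq:master} for one balanced bipartition to obtain the five-parameter family, compute the explicit $5\times 5$ matrix of $P^*$ in the bridge-state basis via binor recoupling to force $A=\tfrac{2}{3}\sqrt{\lambda}$ and $\rho=\phi$ (whence $\tfrac{\lambda}{3}-\tfrac{3}{4}A^2=0$ and the asymmetric coefficients vanish), and then apply $P_{4,5}P^*P_{4,5}$ to reach $\lambda=0$. Your caution about introducing fresh parameters $\tilde A,\tilde\phi,\dots$ for the repartitioned coefficients before imposing $\tilde c=P^*c$ is exactly the bookkeeping the paper carries out.
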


Finally, we can use Theorem \ref{theo:no_6v} to close a gap and deduce a no-go theorem for invariant perfect qubit encodings. 
\begin{coro}
    There are no invariant perfect qubit ($j=\frac{1}{2}$) encodings.
    \begin{proof}
        The statement is a consequence of previous results. A 2-valent IPT is trivially and uniquely given by the identity $I = \mathds{1}$. From invariance, the valence is restricted to even $n$. Valence four is prohibited by Theorem \ref{theo:no_4v_I} and valence six by Theorem \ref{theo:no_6v}. Finally, the Scott bound for perfect tensors \cite{Scott04} gives the upper bound 
        \begin{align}
            n \leq 2 (d^2 - 1)
        \end{align}
        for the valence,
        with $d$ being the local index dimension. With $d = 2$ for qubits, we have $n \leq 6$ excluding any other valence.
    \end{proof}
\end{coro}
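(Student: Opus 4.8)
\section*{Proof proposal}

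The plan is to assemble a finite case analysis on the valence $n$ of a putative invariant perfect qubit tensor, using the structural results already established. First I would observe that an invariant perfect tensor whose every leg carries spin $\frac{1}{2}$ must have even valence: this is exactly the content of the third item of the Example following Lemma~\ref{prop:odd_n} (an odd-valence invariant tensor with all spins $\frac{1}{2}$ violates the inequality \eqref{eq:odd_n}, since $\sum_{i=1}^{n} \tfrac12 > \sum_{i=n+1}^{2n+1}\tfrac12$ is impossible but the reverse fails at equality only when legs are distributed evenly, which $2n+1$ forbids). So it suffices to rule out $n = 2, 4, 6, 8, 10, \dots$.

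Next I would dispatch the small cases. For valence $2$, $SU(2)$-invariance forces the tensor to be proportional to $\epsilon^{ab}$, i.e. (up to normalisation) the identity $\mathds{1}$ under any bipartition, which is the trivial/uninteresting solution and in particular does not constitute a nontrivial encoding. Valence $4$ is excluded by Theorem~\ref{theo:no_4v_I}, and valence $6$ by Theorem~\ref{theo:no_6v}; these are the substantive inputs, proved via Algorithm~\ref{alg:IPT} in the earlier sections.

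Finally, to close off all valences $n \geq 8$ at once, I would invoke the Scott bound for perfect tensors \cite{Scott04}, which gives $n \leq 2(d^{2}-1)$ for local dimension $d$. Substituting $d = 2$ yields $n \leq 6$. Combined with the parity constraint and the three cases above, no admissible valence remains, which proves the corollary.

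The main obstacle is not in the assembly step, which is essentially bookkeeping, but rather that the argument genuinely \emph{requires} the valence-$6$ no-go theorem: without it there would be a gap between the valence-$4$ result and the Scott cutoff at $6$. In other words, the whole weight of the corollary rests on Theorem~\ref{theo:no_6v}, and the role of the Scott bound here is precisely to guarantee that $6$ is the last case one ever needs to check — so the delicate part of the overall program is the hexagonal analysis, while the corollary itself is then immediate.
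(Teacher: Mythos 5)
Your overall proof matches the paper's: dispatch the trivial valence-$2$ case, note that invariance forces even valence, cite Theorems~\ref{theo:no_4v_I} and \ref{theo:no_6v} for valences $4$ and $6$, then invoke the Scott bound $n \leq 2(d^2-1) = 6$ to preclude anything higher. Your final observation about the weight resting on the valence-$6$ result is also a fair reading of why the paper speaks of closing a gap.

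However, the parenthetical you offer to justify the even-valence restriction does not hold up. You try to derive it from Lemma~\ref{prop:odd_n}, i.e. from the inequality \eqref{eq:odd_n}. But if all $j_i = \frac{1}{2}$ and the valence is $2n+1$, inequality \eqref{eq:odd_n} reads $\frac{n}{2} \leq \frac{n+1}{2}$, which is \emph{always} satisfied. It therefore places no obstruction on odd-valence all-qubit tensors. The claim that the inequality ``fails at equality only when legs are distributed evenly'' is not coherent: there is simply no contradiction to extract from that lemma here. The correct reason, which is what the paper means by ``from invariance,'' is purely representation-theoretic and independent of perfectness: the tensor product of an odd number of spin-$\frac{1}{2}$ representations decomposes entirely into half-integer spins, so it contains no copy of the trivial representation, and hence no nonzero intertwiner $I:\frac{1}{2}^{\otimes(2n+1)}\to 0$ exists at all. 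Once you replace your parenthetical with this elementary Clebsch--Gordan argument, the rest of the proposal is sound and coincides with the paper's argument.
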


\section{Conclusion}
\change{This paper presents} a methodology that allows the systematic search for invariant perfect tensors. We \change{restrict} the spin quantum numbers of tensors with even and odd valence separately. While for even valent IPTs all indices must necessarily be in the same irreducible representation, for odd valence we \change{prove} a weaker condition \eqref{eq:odd_n} that restricts the spin quantum numbers to be not too far away from each other.

We further derive a master equation to determine the coefficients of an IPT decomposed into specific coupling schemes, the bridge states. Following our algorithm, we provide two proofs for the non-existence of 4-valent and one proof for 6-valent IPTs. Combining our results with the Scott bound for perfect tensors, we finally show that there can be no IPTs with qubit indices only.

Several generalizations of our methods to higher spin representations and odd valence leave us with open questions. While the generalization of the master equations \eqref{eq:master} are straightforward, finding solutions  quickly becomes more complex and technical. It thus remains to be shown how further general identities and symmetries can be translated into our framework and exploited to simplify Algorithm \ref{alg:IPT}.

Another open question that we leave for future work is the study of an alternative definition of IPTs. Instead of allowing at most half of the \textit{indices/particles} on the input side of any bipartition, one might ask for more and allow also bipartitions for which the dimension of the input representation space is smaller or equal to the output. This still allows for an isometric embedding as asked for in the definition of perfect tensors. Although, more bipartitions have to satisfy the isometry relation, the restrictions from section \ref{sec:limits} do not generalize straightforwardly. Moreover, if every index has the same local dimension $d$, the two definitions coincide. It remains to be shown how the alternative definitions relate, in general.

\begin{acknowledgments}

\rema{We thank Otfried Gühne and Fabian Bernards for discussions, as well as the anonymous referees for their feedback to improve the presentation of this paper.} This work is part of the Munich Quantum Valley, which is supported by the Bavarian state government with funds from the Hightech Agenda Bayern Plus.
\end{acknowledgments}

\twocolumngrid
\bibliography{literature}

\begin{thebibliography}{37}%
\makeatletter
\providecommand \@ifxundefined [1]{%
 \@ifx{#1\undefined}
}%
\providecommand \@ifnum [1]{%
 \ifnum #1\expandafter \@firstoftwo
 \else \expandafter \@secondoftwo
 \fi
}%
\providecommand \@ifx [1]{%
 \ifx #1\expandafter \@firstoftwo
 \else \expandafter \@secondoftwo
 \fi
}%
\providecommand \natexlab [1]{#1}%
\providecommand \enquote  [1]{``#1''}%
\providecommand \bibnamefont  [1]{#1}%
\providecommand \bibfnamefont [1]{#1}%
\providecommand \citenamefont [1]{#1}%
\providecommand \href@noop [0]{\@secondoftwo}%
\providecommand \href [0]{\begingroup \@sanitize@url \@href}%
\providecommand \@href[1]{\@@startlink{#1}\@@href}%
\providecommand \@@href[1]{\endgroup#1\@@endlink}%
\providecommand \@sanitize@url [0]{\catcode `\\12\catcode `\$12\catcode
  `\&12\catcode `\#12\catcode `\^12\catcode `\_12\catcode `\%12\relax}%
\providecommand \@@startlink[1]{}%
\providecommand \@@endlink[0]{}%
\providecommand \url  [0]{\begingroup\@sanitize@url \@url }%
\providecommand \@url [1]{\endgroup\@href {#1}{\urlprefix }}%
\providecommand \urlprefix  [0]{URL }%
\providecommand \Eprint [0]{\href }%
\providecommand \doibase [0]{https://doi.org/}%
\providecommand \selectlanguage [0]{\@gobble}%
\providecommand \bibinfo  [0]{\@secondoftwo}%
\providecommand \bibfield  [0]{\@secondoftwo}%
\providecommand \translation [1]{[#1]}%
\providecommand \BibitemOpen [0]{}%
\providecommand \bibitemStop [0]{}%
\providecommand \bibitemNoStop [0]{.\EOS\space}%
\providecommand \EOS [0]{\spacefactor3000\relax}%
\providecommand \BibitemShut  [1]{\csname bibitem#1\endcsname}%
\let\auto@bib@innerbib\@empty
\bibitem [{\citenamefont {Higuchi}\ and\ \citenamefont
  {Sudbery}(2000)}]{HIGUCHI2000213}%
  \BibitemOpen
  \bibfield  {author} {\bibinfo {author} {\bibfnamefont {A.}~\bibnamefont
  {Higuchi}}\ and\ \bibinfo {author} {\bibfnamefont {A.}~\bibnamefont
  {Sudbery}},\ }\href
  {https://doi.org/https://doi.org/10.1016/S0375-9601(00)00480-1} {\bibfield
  {journal} {\bibinfo  {journal} {Physics Letters A}\ }\textbf {\bibinfo
  {volume} {273}},\ \bibinfo {pages} {213} (\bibinfo {year}
  {2000})}\BibitemShut {NoStop}%
\bibitem [{\citenamefont {Li}\ \emph {et~al.}(2017)\citenamefont {Li},
  \citenamefont {Han}, \citenamefont {Grassl},\ and\ \citenamefont
  {Zeng}}]{Li_2017}%
  \BibitemOpen
  \bibfield  {author} {\bibinfo {author} {\bibfnamefont {Y.}~\bibnamefont
  {Li}}, \bibinfo {author} {\bibfnamefont {M.}~\bibnamefont {Han}}, \bibinfo
  {author} {\bibfnamefont {M.}~\bibnamefont {Grassl}},\ and\ \bibinfo {author}
  {\bibfnamefont {B.}~\bibnamefont {Zeng}},\ }\href
  {https://doi.org/10.1088/1367-2630/aa7235} {\bibfield  {journal} {\bibinfo
  {journal} {New Journal of Physics}\ }\textbf {\bibinfo {volume} {19}},\
  \bibinfo {pages} {063029} (\bibinfo {year} {2017})}\BibitemShut {NoStop}%
\bibitem [{\citenamefont {Helwig}\ \emph {et~al.}(2012)\citenamefont {Helwig},
  \citenamefont {Cui}, \citenamefont {Riera}, \citenamefont {Latorre},\ and\
  \citenamefont {Lo}}]{Helwig:2012nha}%
  \BibitemOpen
  \bibfield  {author} {\bibinfo {author} {\bibfnamefont {W.}~\bibnamefont
  {Helwig}}, \bibinfo {author} {\bibfnamefont {W.}~\bibnamefont {Cui}},
  \bibinfo {author} {\bibfnamefont {A.}~\bibnamefont {Riera}}, \bibinfo
  {author} {\bibfnamefont {J.~I.}\ \bibnamefont {Latorre}},\ and\ \bibinfo
  {author} {\bibfnamefont {H.-K.}\ \bibnamefont {Lo}},\ }\href
  {https://doi.org/10.1103/PhysRevA.86.052335} {\bibfield  {journal} {\bibinfo
  {journal} {Phys. Rev. A}\ }\textbf {\bibinfo {volume} {86}},\ \bibinfo
  {pages} {052335} (\bibinfo {year} {2012})},\ \Eprint
  {https://arxiv.org/abs/1204.2289} {arXiv:1204.2289 [quant-ph]} \BibitemShut
  {NoStop}%
\bibitem [{\citenamefont {Pastawski}\ \emph {et~al.}(2015)\citenamefont
  {Pastawski}, \citenamefont {Yoshida}, \citenamefont {Harlow},\ and\
  \citenamefont {Preskill}}]{Pastawski_2015}%
  \BibitemOpen
  \bibfield  {author} {\bibinfo {author} {\bibfnamefont {F.}~\bibnamefont
  {Pastawski}}, \bibinfo {author} {\bibfnamefont {B.}~\bibnamefont {Yoshida}},
  \bibinfo {author} {\bibfnamefont {D.}~\bibnamefont {Harlow}},\ and\ \bibinfo
  {author} {\bibfnamefont {J.}~\bibnamefont {Preskill}},\ }\bibfield  {journal}
  {\bibinfo  {journal} {Journal of High Energy Physics}\ }\textbf {\bibinfo
  {volume} {2015}},\ \href {https://doi.org/10.1007/jhep06(2015)149}
  {10.1007/jhep06(2015)149} (\bibinfo {year} {2015})\BibitemShut {NoStop}%
\bibitem [{\citenamefont {Bhattacharyya}\ \emph {et~al.}(2016)\citenamefont
  {Bhattacharyya}, \citenamefont {Gao}, \citenamefont {Hung},\ and\
  \citenamefont {Liu}}]{Bhattacharyya16}%
  \BibitemOpen
  \bibfield  {author} {\bibinfo {author} {\bibfnamefont {A.}~\bibnamefont
  {Bhattacharyya}}, \bibinfo {author} {\bibfnamefont {Z.-S.}\ \bibnamefont
  {Gao}}, \bibinfo {author} {\bibfnamefont {L.-Y.}\ \bibnamefont {Hung}},\ and\
  \bibinfo {author} {\bibfnamefont {S.-N.}\ \bibnamefont {Liu}},\ }\href
  {https://doi.org/10.1007/JHEP08(2016)086} {\bibfield  {journal} {\bibinfo
  {journal} {Journal of High Energy Physics}\ }\textbf {\bibinfo {volume}
  {2016}} (\bibinfo {year} {2016})}\BibitemShut {NoStop}%
\bibitem [{\citenamefont {Ryu}\ and\ \citenamefont {Takayanagi}(2006)}]{Ryu06}%
  \BibitemOpen
  \bibfield  {author} {\bibinfo {author} {\bibfnamefont {S.}~\bibnamefont
  {Ryu}}\ and\ \bibinfo {author} {\bibfnamefont {T.}~\bibnamefont
  {Takayanagi}},\ }\href {https://doi.org/10.1103/PhysRevLett.96.181602}
  {\bibfield  {journal} {\bibinfo  {journal} {Phys. Rev. Lett.}\ }\textbf
  {\bibinfo {volume} {96}},\ \bibinfo {pages} {181602} (\bibinfo {year}
  {2006})}\BibitemShut {NoStop}%
\bibitem [{\citenamefont {Almheiri}\ \emph {et~al.}(2015)\citenamefont
  {Almheiri}, \citenamefont {Dong},\ and\ \citenamefont
  {Harlow}}]{Almheiri_2015}%
  \BibitemOpen
  \bibfield  {author} {\bibinfo {author} {\bibfnamefont {A.}~\bibnamefont
  {Almheiri}}, \bibinfo {author} {\bibfnamefont {X.}~\bibnamefont {Dong}},\
  and\ \bibinfo {author} {\bibfnamefont {D.}~\bibnamefont {Harlow}},\
  }\bibfield  {journal} {\bibinfo  {journal} {Journal of High Energy Physics}\
  }\textbf {\bibinfo {volume} {2015}},\ \href
  {https://doi.org/10.1007/jhep04(2015)163} {10.1007/jhep04(2015)163} (\bibinfo
  {year} {2015})\BibitemShut {NoStop}%
\bibitem [{\citenamefont {Cao}\ and\ \citenamefont {Lackey}(2022)}]{ChunJun22}%
  \BibitemOpen
  \bibfield  {author} {\bibinfo {author} {\bibfnamefont {C.}~\bibnamefont
  {Cao}}\ and\ \bibinfo {author} {\bibfnamefont {B.}~\bibnamefont {Lackey}},\
  }\href {https://doi.org/10.1103/PRXQuantum.3.020332} {\bibfield  {journal}
  {\bibinfo  {journal} {PRX Quantum}\ }\textbf {\bibinfo {volume} {3}},\
  \bibinfo {pages} {020332} (\bibinfo {year} {2022})}\BibitemShut {NoStop}%
\bibitem [{\citenamefont {Hayden}\ \emph {et~al.}(2016)\citenamefont {Hayden},
  \citenamefont {Nezami}, \citenamefont {Qi}, \citenamefont {Thomas},
  \citenamefont {Walter},\ and\ \citenamefont {Yang}}]{Hayden_2016}%
  \BibitemOpen
  \bibfield  {author} {\bibinfo {author} {\bibfnamefont {P.}~\bibnamefont
  {Hayden}}, \bibinfo {author} {\bibfnamefont {S.}~\bibnamefont {Nezami}},
  \bibinfo {author} {\bibfnamefont {X.-L.}\ \bibnamefont {Qi}}, \bibinfo
  {author} {\bibfnamefont {N.}~\bibnamefont {Thomas}}, \bibinfo {author}
  {\bibfnamefont {M.}~\bibnamefont {Walter}},\ and\ \bibinfo {author}
  {\bibfnamefont {Z.}~\bibnamefont {Yang}},\ }\bibfield  {journal} {\bibinfo
  {journal} {Journal of High Energy Physics}\ }\textbf {\bibinfo {volume}
  {2016}},\ \href {https://doi.org/10.1007/jhep11(2016)009}
  {10.1007/jhep11(2016)009} (\bibinfo {year} {2016})\BibitemShut {NoStop}%
\bibitem [{\citenamefont {Hosur}\ \emph {et~al.}(2016)\citenamefont {Hosur},
  \citenamefont {Qi}, \citenamefont {Roberts},\ and\ \citenamefont
  {Yoshida}}]{Hosur_2016}%
  \BibitemOpen
  \bibfield  {author} {\bibinfo {author} {\bibfnamefont {P.}~\bibnamefont
  {Hosur}}, \bibinfo {author} {\bibfnamefont {X.-L.}\ \bibnamefont {Qi}},
  \bibinfo {author} {\bibfnamefont {D.~A.}\ \bibnamefont {Roberts}},\ and\
  \bibinfo {author} {\bibfnamefont {B.}~\bibnamefont {Yoshida}},\ }\bibfield
  {journal} {\bibinfo  {journal} {Journal of High Energy Physics}\ }\textbf
  {\bibinfo {volume} {2016}},\ \href {https://doi.org/10.1007/jhep02(2016)004}
  {10.1007/jhep02(2016)004} (\bibinfo {year} {2016})\BibitemShut {NoStop}%
\bibitem [{\citenamefont {Maldacena}\ \emph {et~al.}(2016)\citenamefont
  {Maldacena}, \citenamefont {Shenker},\ and\ \citenamefont
  {Stanford}}]{Maldacena_2016}%
  \BibitemOpen
  \bibfield  {author} {\bibinfo {author} {\bibfnamefont {J.}~\bibnamefont
  {Maldacena}}, \bibinfo {author} {\bibfnamefont {S.~H.}\ \bibnamefont
  {Shenker}},\ and\ \bibinfo {author} {\bibfnamefont {D.}~\bibnamefont
  {Stanford}},\ }\bibfield  {journal} {\bibinfo  {journal} {Journal of High
  Energy Physics}\ }\textbf {\bibinfo {volume} {2016}},\ \href
  {https://doi.org/10.1007/jhep08(2016)106} {10.1007/jhep08(2016)106} (\bibinfo
  {year} {2016})\BibitemShut {NoStop}%
\bibitem [{\citenamefont {de~Boer}\ \emph {et~al.}(2022)\citenamefont
  {de~Boer}, \citenamefont {Dittrich}, \citenamefont {Eichhorn}, \citenamefont
  {Giddings}, \citenamefont {Gielen}, \citenamefont {Liberati}, \citenamefont
  {Livine}, \citenamefont {Oriti}, \citenamefont {Papadodimas}, \citenamefont
  {Pereira}, \citenamefont {Sakellariadou}, \citenamefont {Surya},\ and\
  \citenamefont {Verlinde}}]{deBoer22}%
  \BibitemOpen
  \bibfield  {author} {\bibinfo {author} {\bibfnamefont {J.}~\bibnamefont
  {de~Boer}}, \bibinfo {author} {\bibfnamefont {B.}~\bibnamefont {Dittrich}},
  \bibinfo {author} {\bibfnamefont {A.}~\bibnamefont {Eichhorn}}, \bibinfo
  {author} {\bibfnamefont {S.~B.}\ \bibnamefont {Giddings}}, \bibinfo {author}
  {\bibfnamefont {S.}~\bibnamefont {Gielen}}, \bibinfo {author} {\bibfnamefont
  {S.}~\bibnamefont {Liberati}}, \bibinfo {author} {\bibfnamefont {E.~R.}\
  \bibnamefont {Livine}}, \bibinfo {author} {\bibfnamefont {D.}~\bibnamefont
  {Oriti}}, \bibinfo {author} {\bibfnamefont {K.}~\bibnamefont {Papadodimas}},
  \bibinfo {author} {\bibfnamefont {A.~D.}\ \bibnamefont {Pereira}}, \bibinfo
  {author} {\bibfnamefont {M.}~\bibnamefont {Sakellariadou}}, \bibinfo {author}
  {\bibfnamefont {S.}~\bibnamefont {Surya}},\ and\ \bibinfo {author}
  {\bibfnamefont {H.}~\bibnamefont {Verlinde}},\ }\href
  {https://doi.org/10.48550/ARXIV.2207.10618} {\bibinfo {title} {Frontiers of
  quantum gravity: shared challenges, converging directions}} (\bibinfo {year}
  {2022})\BibitemShut {NoStop}%
\bibitem [{\citenamefont {Thiemann}(2007)}]{Thiemann:2007pyv}%
  \BibitemOpen
  \bibfield  {author} {\bibinfo {author} {\bibfnamefont {T.}~\bibnamefont
  {Thiemann}},\ }\href {https://doi.org/10.1017/CBO9780511755682} {\emph
  {\bibinfo {title} {{Modern Canonical Quantum General Relativity}}}},\
  Cambridge Monographs on Mathematical Physics\ (\bibinfo  {publisher}
  {Cambridge University Press},\ \bibinfo {year} {2007})\BibitemShut {NoStop}%
\bibitem [{\citenamefont {Ashtekar}\ and\ \citenamefont
  {Singh}(2011)}]{Ashtekar:2011ni}%
  \BibitemOpen
  \bibfield  {author} {\bibinfo {author} {\bibfnamefont {A.}~\bibnamefont
  {Ashtekar}}\ and\ \bibinfo {author} {\bibfnamefont {P.}~\bibnamefont
  {Singh}},\ }\href {https://doi.org/10.1088/0264-9381/28/21/213001} {\bibfield
   {journal} {\bibinfo  {journal} {Class. Quant. Grav.}\ }\textbf {\bibinfo
  {volume} {28}},\ \bibinfo {pages} {213001} (\bibinfo {year} {2011})},\
  \Eprint {https://arxiv.org/abs/1108.0893} {arXiv:1108.0893 [gr-qc]}
  \BibitemShut {NoStop}%
\bibitem [{\citenamefont {Penrose}(1971{\natexlab{a}})}]{penrose1}%
  \BibitemOpen
  \bibfield  {author} {\bibinfo {author} {\bibfnamefont {R.}~\bibnamefont
  {Penrose}},\ }in\ \href@noop {} {\emph {\bibinfo {booktitle} {Quantum Theory
  and Beyond}}},\ \bibinfo {editor} {edited by\ \bibinfo {editor}
  {\bibfnamefont {T.}~\bibnamefont {Bastin}}}\ (\bibinfo  {publisher}
  {Cambridge University Press},\ \bibinfo {address} {Cambridge},\ \bibinfo
  {year} {1971})\ p.\ \bibinfo {pages} {151–180}\BibitemShut {NoStop}%
\bibitem [{\citenamefont {Baez}(1996)}]{Baez:1994hx}%
  \BibitemOpen
  \bibfield  {author} {\bibinfo {author} {\bibfnamefont {J.~C.}\ \bibnamefont
  {Baez}},\ }\href {https://doi.org/10.1006/aima.1996.0012} {\bibfield
  {journal} {\bibinfo  {journal} {Adv. Math.}\ }\textbf {\bibinfo {volume}
  {117}},\ \bibinfo {pages} {253} (\bibinfo {year} {1996})},\ \Eprint
  {https://arxiv.org/abs/gr-qc/9411007} {arXiv:gr-qc/9411007} \BibitemShut
  {NoStop}%
\bibitem [{\citenamefont {Bianchi}\ \emph {et~al.}(2011)\citenamefont
  {Bianchi}, \citenamefont {Dona},\ and\ \citenamefont
  {Speziale}}]{Bianchi:2010gc}%
  \BibitemOpen
  \bibfield  {author} {\bibinfo {author} {\bibfnamefont {E.}~\bibnamefont
  {Bianchi}}, \bibinfo {author} {\bibfnamefont {P.}~\bibnamefont {Dona}},\ and\
  \bibinfo {author} {\bibfnamefont {S.}~\bibnamefont {Speziale}},\ }\href
  {https://doi.org/10.1103/PhysRevD.83.044035} {\bibfield  {journal} {\bibinfo
  {journal} {Phys. Rev. D}\ }\textbf {\bibinfo {volume} {83}},\ \bibinfo
  {pages} {044035} (\bibinfo {year} {2011})},\ \Eprint
  {https://arxiv.org/abs/1009.3402} {arXiv:1009.3402 [gr-qc]} \BibitemShut
  {NoStop}%
\bibitem [{\citenamefont {Barrett}\ \emph {et~al.}(2010)\citenamefont
  {Barrett}, \citenamefont {Dowdall}, \citenamefont {Fairbairn}, \citenamefont
  {Hellmann},\ and\ \citenamefont {Pereira}}]{Barrett:2009mw}%
  \BibitemOpen
  \bibfield  {author} {\bibinfo {author} {\bibfnamefont {J.~W.}\ \bibnamefont
  {Barrett}}, \bibinfo {author} {\bibfnamefont {R.~J.}\ \bibnamefont
  {Dowdall}}, \bibinfo {author} {\bibfnamefont {W.~J.}\ \bibnamefont
  {Fairbairn}}, \bibinfo {author} {\bibfnamefont {F.}~\bibnamefont
  {Hellmann}},\ and\ \bibinfo {author} {\bibfnamefont {R.}~\bibnamefont
  {Pereira}},\ }\href {https://doi.org/10.1088/0264-9381/27/16/165009}
  {\bibfield  {journal} {\bibinfo  {journal} {Class. Quant. Grav.}\ }\textbf
  {\bibinfo {volume} {27}},\ \bibinfo {pages} {165009} (\bibinfo {year}
  {2010})},\ \Eprint {https://arxiv.org/abs/0907.2440} {arXiv:0907.2440
  [gr-qc]} \BibitemShut {NoStop}%
\bibitem [{\citenamefont {Ashtekar}\ \emph {et~al.}(2000)\citenamefont
  {Ashtekar}, \citenamefont {Baez},\ and\ \citenamefont
  {Krasnov}}]{Ashtekar:2000eq}%
  \BibitemOpen
  \bibfield  {author} {\bibinfo {author} {\bibfnamefont {A.}~\bibnamefont
  {Ashtekar}}, \bibinfo {author} {\bibfnamefont {J.~C.}\ \bibnamefont {Baez}},\
  and\ \bibinfo {author} {\bibfnamefont {K.}~\bibnamefont {Krasnov}},\ }\href
  {https://doi.org/10.4310/ATMP.2000.v4.n1.a1} {\bibfield  {journal} {\bibinfo
  {journal} {Adv. Theor. Math. Phys.}\ }\textbf {\bibinfo {volume} {4}},\
  \bibinfo {pages} {1} (\bibinfo {year} {2000})},\ \Eprint
  {https://arxiv.org/abs/gr-qc/0005126} {arXiv:gr-qc/0005126} \BibitemShut
  {NoStop}%
\bibitem [{\citenamefont {Domagala}\ and\ \citenamefont
  {Lewandowski}(2004)}]{Domagala:2004jt}%
  \BibitemOpen
  \bibfield  {author} {\bibinfo {author} {\bibfnamefont {M.}~\bibnamefont
  {Domagala}}\ and\ \bibinfo {author} {\bibfnamefont {J.}~\bibnamefont
  {Lewandowski}},\ }\href {https://doi.org/10.1088/0264-9381/21/22/014}
  {\bibfield  {journal} {\bibinfo  {journal} {Class. Quant. Grav.}\ }\textbf
  {\bibinfo {volume} {21}},\ \bibinfo {pages} {5233} (\bibinfo {year}
  {2004})},\ \Eprint {https://arxiv.org/abs/gr-qc/0407051}
  {arXiv:gr-qc/0407051} \BibitemShut {NoStop}%
\bibitem [{\citenamefont {Bianchi}\ \emph {et~al.}(2019)\citenamefont
  {Bianchi}, \citenamefont {Don\`a},\ and\ \citenamefont
  {Vilensky}}]{Bianchi:2018fmq}%
  \BibitemOpen
  \bibfield  {author} {\bibinfo {author} {\bibfnamefont {E.}~\bibnamefont
  {Bianchi}}, \bibinfo {author} {\bibfnamefont {P.}~\bibnamefont {Don\`a}},\
  and\ \bibinfo {author} {\bibfnamefont {I.}~\bibnamefont {Vilensky}},\ }\href
  {https://doi.org/10.1103/PhysRevD.99.086013} {\bibfield  {journal} {\bibinfo
  {journal} {Phys. Rev. D}\ }\textbf {\bibinfo {volume} {99}},\ \bibinfo
  {pages} {086013} (\bibinfo {year} {2019})},\ \Eprint
  {https://arxiv.org/abs/1812.10996} {arXiv:1812.10996 [gr-qc]} \BibitemShut
  {NoStop}%
\bibitem [{\citenamefont {Han}\ and\ \citenamefont {Hung}(2017)}]{Han_2017}%
  \BibitemOpen
  \bibfield  {author} {\bibinfo {author} {\bibfnamefont {M.}~\bibnamefont
  {Han}}\ and\ \bibinfo {author} {\bibfnamefont {L.-Y.}\ \bibnamefont {Hung}},\
  }\bibfield  {journal} {\bibinfo  {journal} {Physical Review D}\ }\textbf
  {\bibinfo {volume} {95}},\ \href {https://doi.org/10.1103/physrevd.95.024011}
  {10.1103/physrevd.95.024011} (\bibinfo {year} {2017})\BibitemShut {NoStop}%
\bibitem [{\citenamefont {Colafranceschi}\ \emph
  {et~al.}(2022{\natexlab{a}})\citenamefont {Colafranceschi}, \citenamefont
  {Chirco},\ and\ \citenamefont {Oriti}}]{Colafranceschi_2022}%
  \BibitemOpen
  \bibfield  {author} {\bibinfo {author} {\bibfnamefont {E.}~\bibnamefont
  {Colafranceschi}}, \bibinfo {author} {\bibfnamefont {G.}~\bibnamefont
  {Chirco}},\ and\ \bibinfo {author} {\bibfnamefont {D.}~\bibnamefont
  {Oriti}},\ }\bibfield  {journal} {\bibinfo  {journal} {Physical Review D}\
  }\textbf {\bibinfo {volume} {105}},\ \href
  {https://doi.org/10.1103/physrevd.105.066005} {10.1103/physrevd.105.066005}
  (\bibinfo {year} {2022}{\natexlab{a}})\BibitemShut {NoStop}%
\bibitem [{\citenamefont {Colafranceschi}\ \emph
  {et~al.}(2022{\natexlab{b}})\citenamefont {Colafranceschi}, \citenamefont
  {Langenscheidt},\ and\ \citenamefont {Oriti}}]{Colafranceschi22}%
  \BibitemOpen
  \bibfield  {author} {\bibinfo {author} {\bibfnamefont {E.}~\bibnamefont
  {Colafranceschi}}, \bibinfo {author} {\bibfnamefont {S.}~\bibnamefont
  {Langenscheidt}},\ and\ \bibinfo {author} {\bibfnamefont {D.}~\bibnamefont
  {Oriti}},\ }\href {https://doi.org/10.48550/ARXIV.2207.07625} {\bibinfo
  {title} {Holographic properties of superposed spin networks}} (\bibinfo
  {year} {2022}{\natexlab{b}})\BibitemShut {NoStop}%
\bibitem [{\citenamefont {Scott}(2004)}]{Scott04}%
  \BibitemOpen
  \bibfield  {author} {\bibinfo {author} {\bibfnamefont {A.~J.}\ \bibnamefont
  {Scott}},\ }\href {https://doi.org/10.1103/PhysRevA.69.052330} {\bibfield
  {journal} {\bibinfo  {journal} {Phys. Rev. A}\ }\textbf {\bibinfo {volume}
  {69}},\ \bibinfo {pages} {052330} (\bibinfo {year} {2004})}\BibitemShut
  {NoStop}%
\bibitem [{\citenamefont {Gisin}\ and\ \citenamefont
  {Bechmann-Pasquinucci}(1998)}]{GISIN19981}%
  \BibitemOpen
  \bibfield  {author} {\bibinfo {author} {\bibfnamefont {N.}~\bibnamefont
  {Gisin}}\ and\ \bibinfo {author} {\bibfnamefont {H.}~\bibnamefont
  {Bechmann-Pasquinucci}},\ }\href
  {https://doi.org/https://doi.org/10.1016/S0375-9601(98)00516-7} {\bibfield
  {journal} {\bibinfo  {journal} {Physics Letters A}\ }\textbf {\bibinfo
  {volume} {246}},\ \bibinfo {pages} {1} (\bibinfo {year} {1998})}\BibitemShut
  {NoStop}%
\bibitem [{\citenamefont {Huber}\ \emph {et~al.}(2018)\citenamefont {Huber},
  \citenamefont {Eltschka}, \citenamefont {Siewert},\ and\ \citenamefont
  {Gühne}}]{Huber_2018}%
  \BibitemOpen
  \bibfield  {author} {\bibinfo {author} {\bibfnamefont {F.}~\bibnamefont
  {Huber}}, \bibinfo {author} {\bibfnamefont {C.}~\bibnamefont {Eltschka}},
  \bibinfo {author} {\bibfnamefont {J.}~\bibnamefont {Siewert}},\ and\ \bibinfo
  {author} {\bibfnamefont {O.}~\bibnamefont {Gühne}},\ }\href
  {https://doi.org/10.1088/1751-8121/aaade5} {\bibfield  {journal} {\bibinfo
  {journal} {Journal of Physics A: Mathematical and Theoretical}\ }\textbf
  {\bibinfo {volume} {51}},\ \bibinfo {pages} {175301} (\bibinfo {year}
  {2018})}\BibitemShut {NoStop}%
\bibitem [{\citenamefont {Huber}\ and\ \citenamefont
  {Wyderka}(2018)}]{Huber_table}%
  \BibitemOpen
  \bibfield  {author} {\bibinfo {author} {\bibfnamefont {F.}~\bibnamefont
  {Huber}}\ and\ \bibinfo {author} {\bibfnamefont {N.}~\bibnamefont
  {Wyderka}},\ }\href@noop {} {\bibinfo {title} {Table of ame states}},\
  \bibinfo {howpublished}
  {\url{http://www.tp.nt.uni-siegen.de/+fhuber/ame.html}} (\bibinfo {year}
  {2018}),\ \bibinfo {note} {accessed: \today}\BibitemShut {NoStop}%
\bibitem [{\citenamefont {Li}\ \emph {et~al.}(2018)\citenamefont {Li},
  \citenamefont {Han}, \citenamefont {Ruan},\ and\ \citenamefont
  {Zeng}}]{Li_2018}%
  \BibitemOpen
  \bibfield  {author} {\bibinfo {author} {\bibfnamefont {Y.}~\bibnamefont
  {Li}}, \bibinfo {author} {\bibfnamefont {M.}~\bibnamefont {Han}}, \bibinfo
  {author} {\bibfnamefont {D.}~\bibnamefont {Ruan}},\ and\ \bibinfo {author}
  {\bibfnamefont {B.}~\bibnamefont {Zeng}},\ }\href
  {https://doi.org/10.1088/1751-8121/aab5de} {\bibfield  {journal} {\bibinfo
  {journal} {Journal of Physics A: Mathematical and Theoretical}\ }\textbf
  {\bibinfo {volume} {51}},\ \bibinfo {pages} {175303} (\bibinfo {year}
  {2018})}\BibitemShut {NoStop}%
\bibitem [{\citenamefont {Penrose}(1971{\natexlab{b}})}]{Penrose71}%
  \BibitemOpen
  \bibfield  {author} {\bibinfo {author} {\bibfnamefont {R.}~\bibnamefont
  {Penrose}},\ }\href@noop {} {\bibinfo {title} {Applications of
  negative-dimensional tensors}} (\bibinfo {year}
  {1971}{\natexlab{b}})\BibitemShut {NoStop}%
\bibitem [{\citenamefont {Kauffman}(2002)}]{Kauffman02}%
  \BibitemOpen
  \bibfield  {author} {\bibinfo {author} {\bibfnamefont {L.}~\bibnamefont
  {Kauffman}},\ }\href {https://doi.org/10.4064/-42-1-187-204} {\bibfield
  {journal} {\bibinfo  {journal} {Banach Center Publications}\ }\textbf
  {\bibinfo {volume} {42}} (\bibinfo {year} {2002})}\BibitemShut {NoStop}%
\bibitem [{\citenamefont {Pietri}(1997)}]{Pietri_1997}%
  \BibitemOpen
  \bibfield  {author} {\bibinfo {author} {\bibfnamefont {R.~D.}\ \bibnamefont
  {Pietri}},\ }\href {https://doi.org/10.1088/0264-9381/14/1/009} {\bibfield
  {journal} {\bibinfo  {journal} {Classical and Quantum Gravity}\ }\textbf
  {\bibinfo {volume} {14}},\ \bibinfo {pages} {53} (\bibinfo {year}
  {1997})}\BibitemShut {NoStop}%
\bibitem [{\citenamefont {Mansuroglu}\ and\ \citenamefont
  {Sahlmann}(2021)}]{Mansuroglu_2021}%
  \BibitemOpen
  \bibfield  {author} {\bibinfo {author} {\bibfnamefont {R.}~\bibnamefont
  {Mansuroglu}}\ and\ \bibinfo {author} {\bibfnamefont {H.}~\bibnamefont
  {Sahlmann}},\ }\bibfield  {journal} {\bibinfo  {journal} {Physical Review D}\
  }\textbf {\bibinfo {volume} {103}},\ \href
  {https://doi.org/10.1103/physrevd.103.066016} {10.1103/physrevd.103.066016}
  (\bibinfo {year} {2021})\BibitemShut {NoStop}%
\bibitem [{\citenamefont {Bernards}(2022)}]{Bernards_PHD}%
  \BibitemOpen
  \bibfield  {author} {\bibinfo {author} {\bibfnamefont {F.}~\bibnamefont
  {Bernards}},\ }\emph {\bibinfo {title} {Generalized bell inequalities and
  quantum entanglement}},\ \href
  {https://doi.org/http://dx.doi.org/10.25819/ubsi/10149} {Ph.D. thesis},\
  \bibinfo  {school} {Universität Siegen} (\bibinfo {year} {2022})\BibitemShut
  {NoStop}%
\bibitem [{\citenamefont {Bernards}\ and\ \citenamefont
  {Gühne}(2022)}]{Bernards22}%
  \BibitemOpen
  \bibfield  {author} {\bibinfo {author} {\bibfnamefont {F.}~\bibnamefont
  {Bernards}}\ and\ \bibinfo {author} {\bibfnamefont {O.}~\bibnamefont
  {Gühne}},\ }\href {https://doi.org/10.48550/ARXIV.2211.03813} {\bibinfo
  {title} {Multiparticle singlet states cannot be maximally entangled for the
  bipartitions}} (\bibinfo {year} {2022}),\ \Eprint
  {https://arxiv.org/abs/2211.03813} {arXiv:2211.03813 [quant-ph]} \BibitemShut
  {NoStop}%
\bibitem [{\citenamefont {Wigner}(1993)}]{Wigner1993}%
  \BibitemOpen
  \bibfield  {author} {\bibinfo {author} {\bibfnamefont {E.~P.}\ \bibnamefont
  {Wigner}},\ }\bibinfo {title} {On the matrices which reduce the kronecker
  products of representations of s. r. groups},\ in\ \href
  {https://doi.org/10.1007/978-3-662-02781-3_42} {\emph {\bibinfo {booktitle}
  {The Collected Works of Eugene Paul Wigner: Part A: The Scientific Papers}}}\
  (\bibinfo  {publisher} {Springer Berlin Heidelberg},\ \bibinfo {address}
  {Berlin, Heidelberg},\ \bibinfo {year} {1993})\ pp.\ \bibinfo {pages}
  {608--654}\BibitemShut {NoStop}%
\bibitem [{\citenamefont {Brunnemann}\ and\ \citenamefont
  {Thiemann}(2006)}]{Brunnemann:2004xi}%
  \BibitemOpen
  \bibfield  {author} {\bibinfo {author} {\bibfnamefont {J.}~\bibnamefont
  {Brunnemann}}\ and\ \bibinfo {author} {\bibfnamefont {T.}~\bibnamefont
  {Thiemann}},\ }\href {https://doi.org/10.1088/0264-9381/23/4/014} {\bibfield
  {journal} {\bibinfo  {journal} {Class. Quant. Grav.}\ }\textbf {\bibinfo
  {volume} {23}},\ \bibinfo {pages} {1289} (\bibinfo {year} {2006})},\ \Eprint
  {https://arxiv.org/abs/gr-qc/0405060} {arXiv:gr-qc/0405060} \BibitemShut
  {NoStop}%
\end{thebibliography}%

\newpage

\onecolumngrid
\appendix
\section{No 4-valent invariant perfect tensors -- Proof I}
\label{sec:no_4v_irred}
\begin{proof}[Proof of Theorem \ref{theo:no_4v_I}]
    We proof the statement by contradiction. If there was an invariant, perfect tensor $I$, we knew from Lemma \ref{prop:even_n} that all spin quantum numbers are equal. For a fixed balanced bipartition $|A| = |B|$, the tensor hence maps the following irreducible subrepresentations
    \begin{align}
        \begin{matrix}
            1 & \xrightarrow{\mu_{1}} & 1 \\
            0 & \xrightarrow{\mu_{0}} & 0
        \end{matrix}
    \end{align}
    where $|\mu_0| = |\mu_1|$. This gives a restriction to $I$
    \begin{align}
        &I^{abcd} = \sqrt{\lambda} (e^{i \phi_s} \epsilon^{a(c} \epsilon^{d)b} + e^{i \phi_a} \epsilon^{a[c} \epsilon^{d]b})
    \end{align}
    However, the same holds for every other balanced bipartition, for instance if we switch the indices $b$ and $d$
    \begin{align}
        I^{abcd} = \sqrt{\lambda} (e^{i \phi_s} \epsilon^{a(c} \epsilon^{d)b} + e^{i \phi_a} \epsilon^{a[c} \epsilon^{d]b}) = \sqrt{\lambda} (e^{i \psi_s} \epsilon^{a(c} \epsilon^{b)d} + e^{i \psi_a} \epsilon^{a[c} \epsilon^{b]d}),
        \label{eq:simple_bipart}
    \end{align}
    which yields constraints on the angles, when calculating the norm of $I$ 
    \begin{align}
        \braket{I|I} = &\lambda (e^{-i \phi_s} \epsilon^{a(c} \epsilon^{d)b} + e^{-i \phi_a} \epsilon^{a[c} \epsilon^{d]b}) (e^{i \psi_s} \epsilon_{a(c} \epsilon_{b)d} + e^{i \psi_a} \epsilon_{a[c} \epsilon_{b]d}) = \\
        =&\lambda \left( -\frac{3}{2}e^{i(\psi_s-\phi_s )} -\frac{3}{2} e^{i(\psi_a-\phi_s )} - \frac{3}{2} e^{i(\psi_s-\phi_a )} + \frac{1}{2}e^{i(\psi_a - \phi_a )} \right) \\
        =:&\lambda \left( -\frac{3}{2}e^{i\theta_s} -\frac{3}{2} e^{i(\theta_a + \theta_s - \xi)} - \frac{3}{2} e^{i\xi} + \frac{1}{2}e^{i\theta_a} \right)
        \stackrel{!}{=} 4\lambda = \braket{I|I},
        \label{eq:simple_bipart_angles}
    \end{align}
    where we evaluated the inner product of the two sides of \eqref{eq:simple_bipart} on the left hand side and the norm of $I$ on the right hand side using $\braket{I|I} = \Tr(I^\dagger I) = \lambda \Tr(\mathds{1}) = \lambda (2j+1)^d$. We already plugged in the spin quantum number $j=\frac{1}{2}$ and the image space dimension $d= \frac{n}{2} = 2$. If we split \eqref{eq:simple_bipart_angles} into real and imaginary part, we can read off the two conditions
    \begin{align}
        -\frac{3}{2}\cos{\theta_s} -\frac{3}{2} \cos{(\theta_a + \theta_s - \xi)} - \frac{3}{2} \cos{i\xi} + \frac{1}{2} \cos{\theta_a} &= 4 \label{eq:cos_walk} \\
        \frac{3}{2}\sin{\theta_s} +\frac{3}{2} \sin{(\theta_a + \theta_s - \xi)} + \frac{3}{2} \sin{\xi} - \frac{1}{2} \sin{\theta_a} &= 0
    \end{align}
    This equation has infinitely many solutions, for instance $\theta_a = \theta_s = \xi = \pi$. However, fixing $\xi$ also fixes the relative angle between the $\phi_{a/s}$, since $\xi = \psi_s - \phi_a = \theta_s + (\phi_s - \phi_a)$. We can read \eqref{eq:cos_walk} as a four-step walk in the complex plane with the destination $4 + 0i$. To reach the destination, we need the three terms with amplitude $\frac{3}{2}$ to be positive and $|\cos(...)| \geq \frac{7}{9}$. This implies the following conditions
    \begin{align}
        \pi -x < \theta_s < \pi + x \qquad \pi-x < \theta_s - \Delta \phi < \pi+x \qquad \pi-x < \theta_a + \Delta\phi < \pi+x
    \end{align}
    with $x = \cos^{-1}\left( \frac{7}{9} \right)$ and $\Delta \phi = \phi_a - \phi_s$. This implies
    \begin{align}
        \theta_s - \pi - x < \Delta \phi < \theta_s - \pi + x \qquad \text{ or } \Delta \phi \in [-2x, 2x]
        \label{eq:ciritical_cond}
    \end{align}
    
    If we were to chose a different balanced bipartition, like switching the indices $b$ and $c$, the can follow the same approach
    \begin{align}
        I^{abcd} = \sqrt{\lambda} (e^{i \phi_s} \epsilon^{a(c} \epsilon^{d)b} + e^{i \phi_a} \epsilon^{a[c} \epsilon^{d]b}) = \sqrt{\lambda} (e^{i \tilde \psi_s} \epsilon^{a(b} \epsilon^{d)c} + e^{i \tilde \psi_a} \epsilon^{a[b} \epsilon^{d]c}),
        \label{eq:simple_bipartII}
    \end{align}
    with new angles $\tilde \psi_{a/s}$. The new constraints on the angles, when calculating the norm of $I$ are slightly different
    \begin{align}
        \braket{I|I} = &\lambda (e^{-i \phi_s} \epsilon^{a(c} \epsilon^{d)b} + e^{-i \phi_a} \epsilon^{a[c} \epsilon^{d]b}) (e^{i \tilde \psi_s} \epsilon_{a(b} \epsilon_{d)c} + e^{i \tilde \psi_a} \epsilon_{a[b} \epsilon_{d]c}) = \\
        =&\lambda \left( -\frac{3}{2}e^{i(\tilde \psi_s-\phi_s )} +\frac{3}{2} e^{i(\tilde \psi_a-\phi_s )} + \frac{3}{2} e^{i(\tilde \psi_s-\phi_a )} + \frac{1}{2}e^{i(\tilde \psi_a - \phi_a )} \right) \\
        =:&\lambda \left( -\frac{3}{2}e^{i\tilde \theta_s} +\frac{3}{2} e^{i(\tilde \theta_a + \tilde \theta_s - \tilde \xi)} + \frac{3}{2} e^{i\tilde \xi} + \frac{1}{2}e^{i\tilde \theta_a} \right)
        \stackrel{!}{=} 4\lambda = \braket{I|I},
        \label{eq:simple_bipart_anglesII}
    \end{align}
    We get similar constraints from \eqref{eq:simple_bipart_anglesII}
    \begin{align}
        -\frac{3}{2}\cos{\tilde \theta_s} + \frac{3}{2} \cos{(\tilde \theta_a + \tilde \theta_s - \tilde \xi)} + \frac{3}{2} \cos{i\tilde \xi} + \frac{1}{2} \cos{\tilde \theta_a} &= 4 \\
        - \frac{3}{2}\sin{\tilde \theta_s} +\frac{3}{2} \sin{(\tilde \theta_a + \tilde \theta_s - \tilde \xi)} + \frac{3}{2} \sin{\tilde \xi} + \frac{1}{2} \sin{\tilde \theta_a} &= 0,
    \end{align}
    which again put conditions on the angles and consequently on $\Delta \phi$
    \begin{align}
        \pi -x < \tilde \theta_s < \pi + x \qquad -x < \tilde \theta_s - \Delta \phi < x \qquad -x < \tilde \theta_a + \Delta\phi < x \\
        \implies \tilde \theta_s -x < \Delta \phi < \tilde \theta_s + x \qquad \text{ or } \Delta \phi \in [\pi - 2x, \pi + 2x]
    \end{align}
    If we compare this constraint with \eqref{eq:ciritical_cond}, we realize a contradiction for the two ranges for $\Delta \phi$ have no overlap. This is due to 
    \begin{align}
        2x < \pi - 2x \iff x = \cos^{-1}\left( \frac{7}{9} \right) < \frac{\pi}{4} \iff \cos\left( \frac{\pi}{4} \right) = \frac{1}{\sqrt{2}} < \frac{7}{9}
    \end{align}
    which is true. Note that the last equivalence contains a flip of inequality sign due to the monotonously negative slope of the cosine. Analogously, $\pi + 2x < 2 \pi - 2x$ for the same reason.
\end{proof}

\section{Derivation of Master Equations for Invariant Perfect Tensors}
\label{sec:master}
In order to formulate a condition for perfectness, we \rema{introduce the binor calculus and present} a number of propositions that will turn out helpful. \rema{Tensor contractions can be depicted using the following basic definitions}
\begin{align}
    \binorI{A}{B} = \delta_A^B, \qquad \binorC{A}{B} = \epsilon^{AB}, \qquad \binorD{A}{B} = \epsilon_{AB}, \qquad \binorX{A}{B}{C}{D} = - \delta_A^D \delta_B^C
\end{align}
\rema{Higher spin $j$ representations are constructed from the spin $\frac{1}{2}$ representation by $2j$ symmetrized indices depicted by a rectangle. Every invariant tensor can be decomposed into a contraction of 3-valent invariant tensors. In the binor calculus these are depicted as follows \cite{Kauffman02, Pietri_1997}.}
\begin{align}
    \binorIcenter{$j$} = 
    \binorIsym{$j$},
    \qquad \qquad
    \threeval{$j$}{$k$}{$l$} = 
    \binorThreeVal{$j$}{$k$}{$l$}{$a$}{$b$}{$c$}
\end{align}

\rema{Here, the spins $a, b, c$ are uniquely chosen such that the number of in- and outflowing spin $\frac{1}{2}$ indices is preserved. In the following, we will deviate from a common notational convention by defining the arcs without an imaginary factor $i$.} 
This changes the binor identity to
\begin{align}
    \binorX{}{}{}{} = \binorU - \binorII.
\end{align}

\begin{prop}
\label{prop:loopstate}
    Let $j, k \in \frac{\mathds{N}}{2}$ be spin quantum numbers. Then the following tensor identity holds
    \begin{align}
        \loopstate{j}{k}{$\frac{1}{2}$}{$j \pm \frac{1}{2}$} = \delta_{j, k} \frac{2j + \frac{3}{2} \pm \frac{1}{2}}{2j + 1} \identity{j}
        \label{eq:loopresolve}
    \end{align}
    \begin{proof}
        The Kronecker delta $\delta_{j, k}$ arises from the fact that tensor must be invariant. Hence, we will assume $j=k$ from now on. Both cases $j \pm \frac{1}{2}$ can be proven with the binor calculus \cite{Kauffman02}. Let us start with the case $j - \frac{1}{2}$. The tensor takes the form
        \begin{align}
            \binorMin{j}{j}{$\frac{1}{2}$}{$j - \frac{1}{2}$} = \identity{j},
        \end{align}
        where we used the projective property of the symmetrizer. The case $j + \frac{1}{2}$ is slightly more involved
        \begin{align}
            \binorMax{j}{j}{$\frac{1}{2}$}{$j + \frac{1}{2}$} = \frac{1}{2j+1} \left( \tikz{ \draw[black, thick] (0,0) circle (.1)} + 2j \tikz{ \draw[black, thick] (0,0) circle (.1); \draw[black, thick] (-0.2,-0.1) -- (0.2, -0.1)} \right) \identity{j} = \frac{2j+2}{2j+1} \identity{j},
        \end{align}
        where we resolved the symmetrization over $j+\frac{1}{2}$ and merged the symmetrization over $j$ again, and finally used simple binor identities. 
    \end{proof}
\end{prop}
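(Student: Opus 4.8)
The plan is to collapse the identity \eqref{eq:loopresolve} to one scalar per case using Schur's lemma, and then to fix that scalar by closing the open line into a loop.

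First I would observe that the left-hand side of \eqref{eq:loopresolve} is, by construction, an $SU(2)$-equivariant map from the spin-$j$ to the spin-$k$ representation: it is a composition of trivalent invariant tensors (a splitting vertex $\tfrac12,\,j\pm\tfrac12,\,j$ and a merging vertex $\tfrac12,\,j\pm\tfrac12,\,k$) with one internal edge contracted. Schur's lemma then forces it to vanish unless $j=k$ — which is exactly the factor $\delta_{j,k}$ — and for $j=k$ it must equal $\mu_\pm\,\mathds{1}_j$ for a scalar $\mu_\pm\in\mathds{C}$, the sign fixed by whether the internal edge carries $j+\tfrac12$ or $j-\tfrac12$. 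It remains to show $\mu_- = 1$ and $\mu_+ = \tfrac{2j+2}{2j+1}$, i.e. the claimed values $\tfrac{2j+\frac32\mp\frac12}{2j+1}$.

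To compute $\mu_\pm$ I would take the trace of both sides, i.e. join the incoming and outgoing spin-$j$ edges by an arc. The right-hand side becomes $\mu_\pm\,\Tr\mathds{1}_j = \mu_\pm\,(2j+1)$, while the left-hand side becomes the closed theta-network on the three edges $\tfrac12$, $j\pm\tfrac12$, $j$. In each case one of these three spins is the sum of the other two ($\tfrac12 + (j-\tfrac12) = j$, resp. $\tfrac12 + j = j+\tfrac12$), so both trivalent vertices of the theta are maximal-coupling vertices, which in the binor calculus are just the total symmetrizer on $2\cdot(\text{largest spin})$ strands applied to the side-by-side strand bundles of the two smaller edges. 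Idempotency of the symmetrizer — a symmetrizer absorbs any sub-symmetrizer sitting inside it, the same ``projective property'' invoked above — collapses the theta-network to the trace of that single symmetrizer, namely the dimension of the largest spin: $\theta(\tfrac12,\,j-\tfrac12,\,j) = 2j+1$ and $\theta(\tfrac12,\,j+\tfrac12,\,j) = 2j+2$. Dividing by $2j+1$ gives $\mu_- = 1$ and $\mu_+ = \tfrac{2j+2}{2j+1}$, which is the statement. (Equivalently, one can skip the trace and resolve the spin-$(j\pm\tfrac12)$ symmetrizer in place via the strand recursion $S_n = \tfrac1n\bigl(\mathds{1} + \sum_{i=1}^{n-1}(i\,n)\bigr)S_{n-1}$ with $n = 2j\pm1$, re-merge the spin-$j$ symmetrizer, and collect contributions: for $j+\tfrac12$ the $2j$ transposition terms each give $1$ and the identity term leaves one spin-$\tfrac12$ loop of value $2$, so $\mu_+ = \tfrac{1}{2j+1}(2+2j)$; for $j-\tfrac12$ the symmetrizer is simply absorbed and $\mu_- = 1$.)

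The mathematical content is light, so there is no genuine obstacle; the one point that needs care is keeping the normalisation conventions straight. Because the $\epsilon$-arcs are defined here without the customary factor $i$, a closed spin-$\tfrac12$ loop evaluates to $+2$ and the binor relation reads $\binorX{}{}{}{} = \binorU - \binorII$; these must be used consistently, so that in the $j+\tfrac12$ case the two terms add to $2+2j$ rather than cancelling and the degenerate theta-network comes out as a positive dimension $2j+1$ for $\Tr\mathds{1}_j$. The normalisation chosen for the trivalent invariant tensors enters uniformly on both sides and does not affect the final ratios, but it has to be the one already fixed by the conventions above.
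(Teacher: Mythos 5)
Your proof is correct, and your primary route differs from the paper's in a small but worthwhile way. The paper works in place: for the $j-\tfrac12$ case it absorbs the smaller symmetrizer into the larger one by idempotency, and for the $j+\tfrac12$ case it resolves the spin-$(j+\tfrac12)$ symmetrizer into the identity plus $2j$ transpositions, re-merges the spin-$j$ symmetrizer, and reads off $\tfrac{2j+2}{2j+1}$; this is exactly what you describe in your closing parenthetical. Your main route instead invokes Schur once more to reduce the operator identity to a single scalar $\mu_\pm$, then closes the external line so that $\mu_\pm (2j+1)$ equals the degenerate theta network $\theta(\tfrac12,\,j\pm\tfrac12,\,j)$, which you evaluate directly from idempotency of the larger symmetrizer. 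The two arguments use the same binor ingredients and the same sign conventions (loop value $+2$, binor identity without the factor $i$), so neither is logically lighter; but the trace formulation is arguably cleaner because it converts an endomorphism computation into a scalar one, and it makes explicit that the only nontrivial input is the value of a maximally-coupled theta net (which, note, you compute from scratch and not by appeal to the paper's Corollary~\ref{cor:theta}, so there is no circularity — that corollary is itself a downstream consequence of the present lemma). Your remark on the normalisation conventions being the one real pitfall is accurate and well placed.
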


We can use Lemma \ref{prop:loopstate} to get a handy expression for the appearance of theta-network structures.
\begin{prop}
    Let $2n\in 2\mathds{N}$ and $j_a, k_b, \, a,b \in \{1, ..., n\}$ lists of spin quantum numbers with $j_n = k_n$. Here, we denote $j = (j_2, ..., j_n)$ and $k = (k_2, ..., k_n)$. Then
    \begin{align}
        \thetaLoop{$j_1$}{$k_1$}{j}{k} = \delta_{j, k} \delta_{j_1, k_1} \prod_{a=2}^n \left( \delta_{j_a, j_{a-1} + \frac{1}{2}} \frac{2j_{a-1} + 2}{2j_{a-1} + 1} + \delta_{j_a, j_{a-1} - \frac{1}{2}} \right) \identity{$j_1$}
        \label{eq:loopstate}
    \end{align}
    \begin{proof}
    This identity is a direct consequence of successively applying Lemma \ref{prop:loopstate}, reducing the loops one after another picking up a factor of $\frac{2j_{a-1}+2}{2j_{a-1} + 1}$ only in the case that $j_a = j_{a-1} + \frac{1}{2}$.
    \end{proof}
\end{prop}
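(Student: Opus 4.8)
The plan is to prove the identity by induction on $n$, dismantling the theta-loop one loop at a time, each step being a direct application of Lemma~\ref{prop:loopstate}. In the degenerate base case $n=1$ the inner sequences $j=(j_2,\dots,j_n)$ and $k$ are empty, the empty product is $1$, and the diagram is just a spin-$j_1$ strand carrying a bivalent cap with open ends $j_1$, $k_1$, which equals $\delta_{j_1,k_1}\,\mathds{1}_{j_1}$ by $SU(2)$-invariance --- exactly the claimed right-hand side. For $n=2$ the statement is literally Lemma~\ref{prop:loopstate} with through-spin $j_1$ and bubble edges $\frac{1}{2}$ and $j_2 = j_1 \pm \frac{1}{2}$, so no new argument is needed.

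For the induction step I would assume the identity for length $n-1$ and examine the length-$n$ theta-loop. First I would dispose of inadmissible data: if some consecutive pair violates $j_a = |j_{a-1}\pm\frac{1}{2}|$ (or the analogue on the $k$-side) the corresponding trivalent vertex vanishes, so the diagram is zero, and the matching factor on the right-hand side is zero as well; hence one may assume all couplings admissible. Next I would isolate the loop adjacent to the closed end: it is built from the edge carrying $j_n$, the edge carrying $k_n$, the last internal edge carrying $\frac{1}{2}$, and the bivalent cap identifying the two --- legitimate because $j_n = k_n$ by hypothesis. Since the cap turns the $j_n$- and $k_n$-edges into a single spin-$j_n$ edge, this loop is precisely the left-hand side of Lemma~\ref{prop:loopstate} with through-strand carrying $j_{n-1}$ on one side and $k_{n-1}$ on the other and bubble edges $\frac{1}{2}$ and $j_n = j_{n-1}\pm\frac{1}{2}$. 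Applying that lemma replaces the loop by the scalar $\delta_{j_{n-1},k_{n-1}}\,\bigl(\delta_{j_n,\, j_{n-1}+\frac{1}{2}}\,\frac{2j_{n-1}+2}{2j_{n-1}+1}+\delta_{j_n,\, j_{n-1}-\frac{1}{2}}\bigr)$ times $\mathds{1}_{j_{n-1}}$, and the remainder of the diagram is then exactly a theta-loop of length $n-1$ formed from $(j_1,\dots,j_{n-1})$ and $(k_1,\dots,k_{n-1})$, whose closing edges now agree thanks to the newly produced $\delta_{j_{n-1},k_{n-1}}$. Feeding this into the induction hypothesis, multiplying by the extracted scalar, and absorbing the standing hypothesis $j_n = k_n$ into the list of Kronecker deltas reproduces the right-hand side of \eqref{eq:loopstate}.

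The part I expect to require the most care is the diagrammatic bookkeeping of the recursion, not any individual computation. One must verify that at every level the through-strand, the spin-$\frac{1}{2}$ internal edge and the capped pair of edges sit exactly as in Lemma~\ref{prop:loopstate}; that the Kronecker deltas chain up correctly, with $j_n = k_n$ seeding the recursion, each peeling at level $a$ consuming the equality $j_a = k_a$ produced by the previous step and emitting $j_{a-1} = k_{a-1}$, and the final peeling at level $a = 2$ supplying $\delta_{j_1,k_1}$; and that the scalar picked up at level $a$ is $\frac{2j_{a-1}+2}{2j_{a-1}+1}$ exactly when $j_a = j_{a-1}+\frac{1}{2}$ and $1$ when $j_a = j_{a-1}-\frac{1}{2}$, which is just the $\pm$ case split $\frac{2j+\frac{3}{2}\pm\frac{1}{2}}{2j+1}$ appearing in \eqref{eq:loopresolve}. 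None of this is conceptually deep; it is a matter of reading the nested-loop picture correctly and tracking indices.
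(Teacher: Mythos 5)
Your proposal is correct and is exactly the paper's approach, only written out more explicitly: the paper's one-line proof says to apply Lemma~\ref{prop:loopstate} successively, and you have formalized the same peeling of loops as an induction on $n$, handling the base cases and tracking how the Kronecker deltas and the conditional factors $\frac{2j_{a-1}+2}{2j_{a-1}+1}$ accumulate.
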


\begin{coro}
\label{cor:theta}
    Let $2n\in 2\mathds{N}$ and $j_a, k_b, \, a,b \in \{1, ..., n\}$ lists of spin quantum numbers with $j_n = k_n$. Then
    \begin{align}
        _j\Theta_k = \delta_{j, k} \delta_{j_1, k_1} \prod_{a=2}^n \left( \delta_{j_a, j_{a-1} + \frac{1}{2}} \frac{2j_{a-1} + 2}{2j_{a-1} + 1} + \delta_{j_a, j_{a-1} - \frac{1}{2}} \right) (2j_1+1)
    \end{align}
    \begin{proof}
    The statement becomes obvious when taking the trace of \eqref{eq:loopstate}.
    \end{proof}
\end{coro}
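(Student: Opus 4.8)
The plan is to obtain Corollary~\ref{cor:theta} directly by closing the two open strands of the loop-state identity~\eqref{eq:loopstate} of the preceding Lemma. By Definition~\ref{def:theta} the quantity $_j\Theta_k$ is a closed diagram, hence a scalar, and it is precisely the diagram obtained from the loop state in~\eqref{eq:loopstate} by joining up its two open legs (which carry spins $j_1$ and $k_1$, forced equal by invariance); in operator language this join is nothing but the trace. So the first step is simply to record that $_j\Theta_k$ equals $\tr$ of the left-hand side of~\eqref{eq:loopstate}.

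The second step is to substitute the evaluation supplied by~\eqref{eq:loopstate}. Its right-hand side is the scalar $\delta_{j,k}\,\delta_{j_1,k_1}\,\prod_{a=2}^{n}\bigl(\dots\bigr)$ multiplying the identity wire on the spin-$j_1$ irrep, i.e.\ $\mathds{1}$ on a space of dimension $2j_1+1$. Taking the trace pulls the scalar through unchanged and turns $\mathds{1}$ into $\tr\mathds{1}=2j_1+1$, which reproduces the claimed formula verbatim. The two Kronecker deltas carry over untouched from~\eqref{eq:loopstate}: the first expresses that the two coupling chains $j$ and $k$ must coincide for the diagram not to vanish by $SU(2)$ invariance, while the product silently enforces the admissibility $j_a=|j_{a-1}\pm\tfrac12|$ by vanishing otherwise.

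The only point that warrants any care is the value assigned to a closed spin-$j_1$ strand, which hinges on the normalisation of the arcs. Under the convention fixed earlier in this appendix --- arcs defined without the imaginary prefactor, so that the binor identity reads $\binorX{}{}{}{}=\binorU-\binorII$ --- a closed spin-$j$ loop evaluates to the positive integer $2j+1$ rather than $(-1)^{2j}(2j+1)$, and this is exactly what produces the stated sign. As a consistency check one may specialise to $j=k=j_{max}=(\tfrac12,1,\dots,\tfrac n2)$: each factor of the product is then $\tfrac{2j_{a-1}+2}{2j_{a-1}+1}$, the product telescopes to $\tfrac{n+1}{2}$, and multiplication by $2j_1+1=2$ gives $n+1=2j_n+1$, which matches the value $_{j_{max}}\Theta_{j_{max}}=2j_n+1$ already used in Corollary~\ref{cor:max_coeff}. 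I expect no genuine obstacle here: once the trace is taken the statement is immediate --- this is why the author's proof can be a single line.
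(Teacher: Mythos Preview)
Your proof is correct and follows exactly the paper's approach: take the trace of~\eqref{eq:loopstate}, which converts the identity wire on the spin-$j_1$ irrep into the factor $2j_1+1$ while the scalar prefactor passes through unchanged. The additional sign discussion and telescoping consistency check are sound but go beyond what the paper records.
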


The last ingredient we need is the decomposition of the identity operator into balanced bridge states $\curlyvee_{(j, k)}$. In general, we can decompose any tensor $I$ into tensors $\{V_{i}\}$ which are orthogonal with respect to the Hilbert Schmidt product
\begin{align}
    \braket{V_i | V_j} = \delta_{i, j} ||V_i||^2.
    \label{eq:HSprod}
\end{align}
In \eqref{eq:HSprod}, the tensors $V_i$ do not need to be normalized. We will show now that the bridge states as defined in Definition \ref{def:bridge} satisfy these orthogonality relations.

\begin{prop}
    Let $\curlyvee_{(j, k)}$ and $\curlyvee_{(j', k')}$ be two bridge states as defined in Definition \ref{def:bridge}. The bridge states are orthogonal, i.e.
    \begin{align}
        \braket{\curlyvee_{(j,k)}|\curlyvee_{(j', k')}} = \delta_{j,j'} \delta_{k,k'} ||\curlyvee_{(j,k)}||^2
    \end{align}
    with the norm being just a theta network as defined in Definition \ref{def:theta}
    \begin{align}
        ||\curlyvee_{(j,k)}||^2 = _{(j,k)}\Theta_{(j,k)}
    \end{align}
    \begin{proof}
        We calculate
        \begin{align}
            \braket{\curlyvee_{(j,k)}|\curlyvee_{(j', k')}} = \Tr \left( \curlyvee^\dagger_{(j,k)} \curlyvee_{(j',k')} \right)=\vnorm{k}{j}{j'}{k'} = _{(j, k)}\Theta_{(j', k')} = \delta_{j,j'} \delta_{k,k'} {}_{(j, k)}\Theta_{(j, k)},
        \end{align}
        where we indicated index contractions by dotted lines and the last step uses the orthogonality of theta networks from Corollary \ref{cor:theta}.
    \end{proof}
\end{prop}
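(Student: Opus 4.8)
The plan is to reduce the claim to Corollary \ref{cor:theta} by a single graphical step. First I would write the overlap as a Hilbert--Schmidt trace, $\braket{\curlyvee_{(j,k)}|\curlyvee_{(j',k')}} = \Tr\!\left(\curlyvee^\dagger_{(j,k)}\,\curlyvee_{(j',k')}\right)$, and unfold each bridge state into its defining chain of 3-valent invariant tensors joined by internal spin $\frac{1}{2}$ edges, as dictated by Definition \ref{def:bridge}. Gluing the chain of $\curlyvee_{(j,k)}$ to the conjugated chain of $\curlyvee_{(j',k')}$ along all external spin $\frac{1}{2}$ legs produces a closed diagram with no free indices, and by inspection this is exactly the theta network of Definition \ref{def:theta} in which the labels $(j,k)$ run along one strand, $(j',k')$ along the other, and every rung carries spin $\frac{1}{2}$. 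Hence $\braket{\curlyvee_{(j,k)}|\curlyvee_{(j',k')}} = {}_{(j,k)}\Theta_{(j',k')}$.

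Second, I would invoke Corollary \ref{cor:theta}, which already evaluates such a network: it supplies a Kronecker $\delta$ between the two label lists (forcing $j = j'$ and $k = k'$, since a nonzero 3-valent invariant coupling is unique by Schur's lemma and the bridge condition pins the endpoints to $\frac{1}{2}$), together with the explicit product of loop factors $\frac{2j_{a-1}+2}{2j_{a-1}+1}$ --- accumulated one loop at a time via Lemma \ref{prop:loopstate} --- times $(2j_1+1)$. Setting $j' = j$, $k' = k$ in that formula identifies the remaining value with $\|\curlyvee_{(j,k)}\|^2$, which yields both assertions at once: $\braket{\curlyvee_{(j,k)}|\curlyvee_{(j',k')}} = \delta_{j,j'}\delta_{k,k'}\,{}_{(j,k)}\Theta_{(j,k)}$ and $\|\curlyvee_{(j,k)}\|^2 = {}_{(j,k)}\Theta_{(j,k)}$.

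I do not expect a genuine obstacle, since the conceptual content sits in Lemma \ref{prop:loopstate} and Corollary \ref{cor:theta}; what remains is bookkeeping. Concretely I would check that the normalisation implicit in \eqref{eq:bridge_state} is the one for which $\Tr\!\left(\curlyvee^\dagger_{(j,k)}\curlyvee_{(j,k)}\right)$ comes out as ${}_{(j,k)}\Theta_{(j,k)}$ with no stray powers of $(2j_a+1)$, and that the level-by-level matching of intermediate spins really propagates to the end of the chain --- i.e.\ that the coupling scheme has no branching that could let two coinciding labels split again further along, which is precisely the one-index-per-vertex structure guaranteed by Definition \ref{def:bridge}.
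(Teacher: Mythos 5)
Your proposal is correct and follows essentially the same route as the paper: rewrite the overlap as a Hilbert--Schmidt trace, observe that gluing the two bridge-state chains along their external spin-$\tfrac12$ legs closes them into the theta network ${}_{(j,k)}\Theta_{(j',k')}$, and then read off both the Kronecker deltas and the norm from Corollary~\ref{cor:theta}. The paper's proof is a one-line graphical version of exactly this argument, so there is nothing to reconcile.
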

\begin{rem}
As every bridge state is also a coupling scheme into Wigner 3j symbols \cite{Wigner1993} the set of all invariant $n$-valent bridge states is also a complete basis.
\end{rem}
The decomposition of the identity operator into bridge states will be of central importance in the following discussion.
\begin{prop}
\label{prop:1_dec}
    Let $\curlyvee_{(j,k)}$ be balanced ($n_1 = n_2 = n$) bridge states as defined in Definition \ref{def:bridge}. The bridge state decomposition of the identity reads
    \begin{align}
        \mathds{1} = \sum_{j,k} \frac{{}_j\Theta_{k}}{{}_{(j,k)}\Theta_{(j,k)}} \curlyvee_{(j,k)}
    \end{align}
    \begin{proof}
    Since the bridge states $\{\curlyvee_{(j,k)}\}$ form a basis of the invariant tensors, we can write the identity as a linear combination
    \begin{align}
        \mathds{1} = \sum_{j,k} c_{j,k} \curlyvee_{(j,k)}
    \end{align}
    The coefficients $c_{j,k}$ can be found by calculating the inner products
    \begin{align}
        c_{j,k} = \frac{\braket{\curlyvee_{(j,k)}|\mathds{1}}}{||\curlyvee_{(j,k)}||^2} = \frac{1}{{}_{(j,k)}\Theta_{(j,k)}} \Tr(\curlyvee^\dagger_{(j,k)}) = \frac{{}_j\Theta_{k}}{{}_{(j,k)}\Theta_{(j,k)}}
    \end{align}
    which follows from the orthogonality of the bridge states $\curlyvee_{(j,k)}$.
    \end{proof}
\end{prop}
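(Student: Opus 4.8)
The plan is to use that the balanced bridge states $\curlyvee_{(j,k)}$ (those with $n_1=n_2=n$) form an orthogonal basis of the space of $SU(2)$-invariant tensors $(\tfrac12)^{\otimes n}\to(\tfrac12)^{\otimes n}$, a fact recorded in the Remark following the orthogonality proposition above (every bridge state is a coupling scheme into Wigner $3j$ symbols, hence these span the invariant space, and they are mutually orthogonal with norm ${}_{(j,k)}\Theta_{(j,k)}$). The identity operator is itself such an invariant tensor — it trivially commutes with the diagonal action $g^{\otimes n}$, so it satisfies the invariance condition of the first Definition with $k=n$ — and therefore admits a unique expansion $\mathds{1}=\sum_{j,k}c_{j,k}\curlyvee_{(j,k)}$. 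The whole task then reduces to computing the coefficients $c_{j,k}$.

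First I would extract each coefficient by orthogonal projection with respect to the Hilbert–Schmidt inner product: using $\braket{\curlyvee_{(j,k)}|\curlyvee_{(j',k')}}=\delta_{j,j'}\delta_{k,k'}\,{}_{(j,k)}\Theta_{(j,k)}$ and $\|\curlyvee_{(j,k)}\|^2={}_{(j,k)}\Theta_{(j,k)}$, one gets
\begin{align}
    c_{j,k}=\frac{\braket{\curlyvee_{(j,k)}|\mathds{1}}}{\|\curlyvee_{(j,k)}\|^2}=\frac{\Tr\!\big(\curlyvee^\dagger_{(j,k)}\big)}{{}_{(j,k)}\Theta_{(j,k)}},
\end{align}
since $\braket{\curlyvee_{(j,k)}|\mathds{1}}=\Tr(\curlyvee^\dagger_{(j,k)}\mathds{1})=\Tr(\curlyvee^\dagger_{(j,k)})$. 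Second, I would evaluate the numerator graphically in the binor calculus: taking the trace of a balanced bridge state means closing its $n$ upper spin-$\tfrac12$ legs onto its $n$ lower legs, and the left coupling chain $(j_1,\dots,j_n)$ and right chain $(k_1,\dots,k_n)$, glued along all the spin-$\tfrac12$ rungs and sharing the bridge edge $j_n=k_n$, is precisely the closed theta network ${}_j\Theta_k$ of Definition \ref{def:theta} (conjugation is immaterial, as theta networks evaluate to real numbers by Corollary \ref{cor:theta}). Hence $\Tr(\curlyvee^\dagger_{(j,k)})={}_j\Theta_k$ and $c_{j,k}={}_j\Theta_k/{}_{(j,k)}\Theta_{(j,k)}$, which by Corollary \ref{cor:theta} vanishes automatically unless $j_1=k_1$ and the admissibility conditions $j_a=|j_{a-1}\pm\tfrac12|$ (and likewise for $k$) hold — exactly the data for which a bridge state exists — so the sum in the claimed formula is well defined.

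The step that warrants care is the graphical identification $\Tr(\curlyvee^\dagger_{(j,k)})={}_j\Theta_k$: one must check that the cyclic identification of the $2n$ external spin-$\tfrac12$ legs glues the two coupling chains in exactly the order prescribed by Definition \ref{def:theta}, rather than producing a permuted or twisted variant of the network. This is routine but slightly delicate leg-bookkeeping in the diagrammatic notation. An alternative route, which avoids invoking the abstract basis property, is a direct induction on $n$: peel off the outermost rung of every bridge state with the loop-resolution identity of Lemma \ref{prop:loopstate}, reducing $\sum_{j,k}\tfrac{{}_j\Theta_k}{{}_{(j,k)}\Theta_{(j,k)}}\curlyvee_{(j,k)}$ to the corresponding sum at valence $2(n-1)$ together with a resolution of the spin-$\tfrac12$ identity on the peeled leg; this is more computational but uses only identities already established in the Appendix.
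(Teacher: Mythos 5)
Your proposal is correct and follows the paper's own argument: expand $\mathds{1}$ in the orthogonal bridge-state basis, extract the coefficients by Hilbert–Schmidt projection using $\|\curlyvee_{(j,k)}\|^2={}_{(j,k)}\Theta_{(j,k)}$, and identify $\Tr(\curlyvee^\dagger_{(j,k)})$ with the theta network ${}_j\Theta_k$. The extra remarks you add (that $\mathds{1}$ is itself invariant, and the caution about leg bookkeeping when closing the trace) are sensible but do not change the route.
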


With the preparations made, we can now precisely formulate conditions for invariant perfect tensors.
\begin{proof}[Proof of Theorem \ref{theo:master}]
        We proceed by calculating both sides of \eqref{eq:perfectness} in the bridge state decomposition. As we already know the bridge state decomposition of $\mathds{1}$ from Lemma \ref{prop:1_dec}, we are left with calculating the left hand side:
        \begin{align}
            I^\dagger I &= \sum_{j,k,j',k'} c_{j,k} \overline{c_{j',k'}} \vdagv{j}{k}{k'}{j'} = \sum_{j,k,j',k'} c_{j,k} \overline{c_{j',k'}} \delta_{k,k'} \frac{_k\Theta_{k'}}{2k_{n_2}+1} \curlyvee_{(j,j')} \nonumber \\
            &= \sum_{j,k,j'} c_{j,k} \overline{c_{j',k}} \frac{_k\Theta_{k}}{2k_{n_2}+1} \curlyvee_{(j,j')}
            \label{eq:vdagv_dec}
        \end{align}
        For the right hand side of \eqref{eq:perfectness} let us recall the bridge decomposition of $\mathds{1}$
        \begin{align}
            \mathds{1} = \sum_{j,j'} \frac{_j\Theta_{j'}}{_{(j,j')}\Theta_{(j,j')}} \curlyvee_{(j,j')}
        \end{align}
        These coefficients can be compared to \eqref{eq:vdagv_dec} yielding the equations
        \begin{align}
            \sum_{k} c_{j,k} \overline{c_{j',k}} \frac{_k\Theta_{k}}{2k_{n_2}+1} = \frac{_j\Theta_{j'}}{_{(j,j')}\Theta_{(j,j')}}
            \label{eq:almost_result}
        \end{align}
        Let us further inspect the right hand side of this equation to get rid of the double evaluation of theta networks
        \begin{align}
            \frac{_j\Theta_{j'}}{_{(j,j')}\Theta_{(j,j')}} &= \delta_{j, j'} \frac{
                \prod_{a=1}^{n_1 - 1} \left( \delta_{j_a, j_{a-1} + \frac{1}{2}} \frac{2j_{a-1} + 2}{2j_{a-1} + 1} + \delta_{j_a, j_{a-1} - \frac{1}{2}} \right) (2j_{n_1}+1) 
            }{
                \prod_{a=1}^{n_1 - 1} \left( \delta_{j_a, j_{a-1} + \frac{1}{2}} \frac{2j_{a-1} + 2}{2j_{a-1} + 1} + \delta_{j_a, j_{a-1} - \frac{1}{2}} \right)^2 (2j_{n_1}+1)
            } \nonumber \\
            &= \delta_{j, j'} \frac{1}{
                \prod_{a=1}^{n_1 - 1} \left( \delta_{j_a, j_{a-1} + \frac{1}{2}} \frac{2j_{a-1} + 2}{2j_{a-1} + 1} + \delta_{j_a, j_{a-1} - \frac{1}{2}} \right)
            } \nonumber \\
            &= \delta_{j, j'} \frac{2j_{n_1}+1}{_j\Theta_j}
        \end{align}
        Finally, multiplying \eqref{eq:almost_result} with $2j_{n_1} + 1$ and using $j_{n_1} \stackrel{!}{=} k_{n_2}$ yields the result.
    \end{proof}

\begin{rem}
\label{rem:generalize}
    \rema{
        Changing the number of indices to odd valence is straightforward in the above discussion. The crucial difference comes from the fact that no invariant spin $\frac{1}{2}$ encodings with odd valence exist. Thus, the discussion needs to be generalized to higher spins $l \geq \frac{1}{2}$. \\
        To do this, Lemma \ref{prop:loopstate} needs to be generalized for higher excitations. The condition $\delta_{j,k}$ remains the same, but the spin numbers in the middle loop of \eqref{eq:loopresolve} become $l$ and $\tilde l \in \{|j-l|, |j-l|+1, ..., j+l\}$. The multiplicative factor on the right hand side of \eqref{eq:loopresolve} will depend on $\tilde{l}$. This makes the generalization of Theorem \ref{theo:master} to arbitrary spin $l$ straightforward, but tedious for high spin $l$, as every combination of $l_1, l_2 \in \{|j-l|, |j-l|+1, ..., j+l\}$ has to be considered in the basis state decompositions.\\
        The smallest non-trivial odd-valent candidate of an IPT, for instance, would be the five-valent tensor with all indices being $j=1$ representations. In the bridge state expansion, it would have the form
        \begin{align}
            \sum_{j,k} c_{j,k} \vstatefive{$j$}{$k$}{1}{1}{1}{1}{1} \qquad (j,k) \in \{(0,1), (0,2), (1, 0), (1, 1), (1,2), (2,1), (2,2)\},
        \end{align}
        with seven coefficients (two more than the six-valent spin $\frac{1}{2}$ candidate, see Appendix \ref{sec:no_6v}).
    }
\end{rem}

\section{Additional Equations from Repartition}
\label{sec:repartitions}
Before we study applications of \eqref{eq:master}, we want to present a number of repartitions which reduce the further analysis to only non-trivial permutations. We further sketch a systematic method of how to translate non-trivial repartitions into additional constraint equations for the coefficients $c_{(j,k)}$.

\subsection{Trivial Repartitions}
Even though the previous discussion holds for arbitrary bipartitions with index set $|A| \leq |B|$, we will show that only the balanced (or almost balanced in the case of odd $n$) bipartitions are of relevance for the study of invariant perfect tensors.
\begin{prop}
\label{prop:trvial_I}
    Let $I$ be an invariant tensor that satisfies $I^\dagger I = \lambda \mathds{1}$ for some $\lambda \in \mathds{R}$ and an implicit bipartition $A, B$. Then, the tensor $\tilde I$ which emerges from $I$ by moving one index from $A$ into $B$ is also a partial isometry $\tilde I^\dagger \tilde I = \tilde \lambda \mathds{1}$.
    \begin{proof}
        Let us write the tensor $I$ in index notation $\tensor{I}{_{B_1 ... B_b}^{A_1 ... A_a}}$. It satisfies
        \begin{align}
            \tensor{I}{^\dagger_{A_1 ... A_a}^{B_1 ... B_b}} \tensor{I}{_{B_1 ... B_b}^{A'_1 ... A'_a}} = \lambda \delta^{A'_1}_{A_1} \cdots \delta^{A'_a}_{A_a}
        \end{align}
        Assume without loss of generality that we move the $a^\text{th}$ index to $B$. The new tensor satisfies
        \begin{align}
            \tensor{\tilde I}{^\dagger_{A_1 ... A_{a-1}}^{A_a B_1 ... B_b}} \tensor{I}{_{B_1 ... B_b}^{A'_1 ... A'_{a-1}}_{A_a}} = \lambda \delta^{A'_1}_{A_1} \cdots \delta^{A'_{a-1}}_{A_{a-1}} \delta^{A_a}_{A_a} = (2j_a + 1) \lambda \delta^{A'_1}_{A_1} \cdots \delta^{A'_{a-1}}_{A_{a-1}}.
        \end{align}
        Which proves the statement with a new scaling factor $\tilde \lambda = (2j_a + 1) \lambda$ where $j_a$ denotes the spin quantum number corresponding to the $a^\text{th}$ index. 
    \end{proof}
\end{prop}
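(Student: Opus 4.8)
The plan is to observe that the relation $I^\dagger I = \lambda\mathds{1}$ depends only on $I$ regarded as a linear map from $\bigotimes_{i\in A} j_i$ to $\bigotimes_{i\in B} j_i$, and that moving one index from $A$ into $B$ does not alter the underlying array of numbers at all: it merely re-reads one leg, previously an input, as an output. Concretely, I would write $I$ in index notation $\tensor{I}{_{B_1\ldots B_b}^{A_1\ldots A_a}}$, assume without loss of generality that the transferred index is $A_a =: c$ (re-orderings within $A$ or within $B$ are handled separately, cf.\ Lemma \ref{prop:trvial_II}), and define $\tilde I$ by $\tensor{\tilde I}{_{c\,B_1\ldots B_b}^{A_1\ldots A_{a-1}}} := \tensor{I}{_{B_1\ldots B_b}^{A_1\ldots A_{a-1}c}}$, which is legitimate since we identify $V$ with $V^*$ and raise or lower indices with $\delta$.

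The second step is a one-line contraction. By the composition rule for tensors with matching bipartitions, $\tilde I^\dagger \tilde I$ on $\bigotimes_{i\in A\setminus\{a\}} j_i$ is obtained by contracting $\overline{\tilde I}$ with $\tilde I$ over all of $B_1,\ldots,B_b$ \emph{and} over the transferred index $c$. Substituting the definition of $\tilde I$, the $B$-contraction alone is exactly $I^\dagger I$ evaluated between $(A_1,\ldots,A_{a-1},c)$ and $(A_1',\ldots,A_{a-1}',c)$, which by hypothesis equals $\lambda\,\delta^{A_1'}_{A_1}\cdots\delta^{A_{a-1}'}_{A_{a-1}}\,\delta^{c}_{c}$. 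Carrying out the remaining sum over $c$ replaces $\delta^{c}_{c}$ by the dimension $2j_a+1$ of that representation, so $\tilde I^\dagger \tilde I = (2j_a+1)\lambda\,\mathds{1}$, i.e.\ the claim holds with $\tilde\lambda = (2j_a+1)\lambda$.

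I would then add the short remark that $\tilde I$ is again SU(2)-invariant, since transferring a leg from domain to codomain commutes with the group action, so $\tilde I$ genuinely belongs to the class of invariant tensors studied here; note, however, that invariance plays no role in the computation itself. There is no real obstacle in this argument — the only content is the bookkeeping point that the transferred leg, which was a free (un-summed) index appearing on both copies of $I$ in $I^\dagger I$ and therefore contributed a Kronecker delta, becomes a summed index in $\tilde I^\dagger \tilde I$, collapsing that delta to its trace $2j_a+1$. The single thing to be careful about is keeping the tuple-ordering conventions for $A$ and $B$ consistent throughout.
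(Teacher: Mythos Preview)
Your proof is correct and follows essentially the same approach as the paper: write $I$ in index notation, move the last $A$-index into $B$, and observe that the formerly free Kronecker delta $\delta^{A_a}_{A_a}$ now gets traced to $2j_a+1$, yielding $\tilde\lambda=(2j_a+1)\lambda$. Your additional remarks on invariance and on the bookkeeping of the transferred leg are fine elaborations but not new ingredients.
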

As a consequence of Lemma \ref{prop:trvial_I}, we will, without loss of generality, focus on balanced bipartitions in the following. Another set of trivial repartitions is the permutation of neighboring indices within the domain $A$ or within the image $B$.
\begin{prop}
\label{prop:trvial_II}
    Let $I$ be an invariant tensor that satisfies $I^\dagger I = \lambda \mathds{1}$ for some $\lambda \in \mathds{R}$ and an implicit bipartition $A, B$. Then the following tensors are also isometries
    \begin{enumerate}
        \item $\tilde I$ emerging from $I$ by exchanging two domain indices $A_i, A_j \in A$ 
        \item $\tilde I$ emerging from $I$ by exchanging two image indices $B_i, B_j \in B$
    \end{enumerate}
    \begin{proof}
        The proposition can be seen in the binor formulation of the isometry property. We know that $I$ satisfies
        \begin{align}
            I^\dagger I = \invtensor{$I^\dagger$}{$I$} = \lambda \identities.
        \end{align}
        \begin{enumerate}
            \item If we exchange two neighboring indices in $A$, we can calculate the isometry property
            \begin{align}
                \tilde{I}^\dagger \tilde I = \trivialone{$I^\dagger$}{$I$} = \lambda \identitiesone = \lambda \identities.
            \end{align}
            \item Analogously exchanging two neighboring indices in $B$ yields
            \begin{align}
                \tilde{I}^\dagger \tilde I = \trivialtwo{$I^\dagger$}{$I$} = \invtensor{$I^\dagger$}{$I$} = \lambda \identities.
            \end{align}
        \end{enumerate}
        For the exchange of non-neighboring indices, the above calculations can be iterated.
    \end{proof}
\end{prop}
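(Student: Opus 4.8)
The plan is to observe that exchanging two tensor factors inside the domain or inside the image of $I$ is the same as pre- or post-composing $I$ with a \emph{unitary} operator, namely the transposition of two tensor factors, and that unitaries acting on the input or output of a partial isometry cannot destroy the isometry property. Fix the bipartition $(A,B)$ and regard $I$ as a linear map $I\colon\bigotimes_{a\in A}V_a\to\bigotimes_{b\in B}V_b$, so that the hypothesis reads $I^\dagger I=\lambda\mathds{1}$ on $\bigotimes_{a\in A}V_a$.

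For statement~1, let $P$ be the operator on $\bigotimes_{a\in A}V_a$ that swaps the tensor factors $V_{A_i}$ and $V_{A_j}$; then $\tilde I=I\circ P$. Because $P$ is a transposition of tensor factors it is unitary and self-inverse, $P^\dagger=P=P^{-1}$, so $\tilde I^\dagger\tilde I=P^\dagger I^\dagger I\,P=P(\lambda\mathds{1})P=\lambda P^2=\lambda\mathds{1}$, and $\tilde I$ is again a partial isometry --- with the \emph{same} constant $\tilde\lambda=\lambda$, in contrast to Lemma~\ref{prop:trvial_I}. For statement~2, let $Q$ be the transposition of the factors $V_{B_i},V_{B_j}$ of $\bigotimes_{b\in B}V_b$; then $\tilde I=Q\circ I$ and $\tilde I^\dagger\tilde I=I^\dagger Q^\dagger Q\,I=I^\dagger I=\lambda\mathds{1}$, again with $\tilde\lambda=\lambda$. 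Note that this already covers an arbitrary, not necessarily adjacent, pair of indices; if one prefers to treat only neighbouring factors, as in the figures, the general case follows by composition since adjacent transpositions generate the symmetric group.

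The same bookkeeping can be carried out diagrammatically, matching the binor calculus of Appendix~\ref{sec:master}: start from the closed diagram for $I^\dagger I=\lambda\,\mathds{1}$ and insert the crossing on the relevant pair of legs of $\tilde I$ together with the mirror crossing on $\tilde I^\dagger$. In case~2 the two crossings sit on the lines contracted between $I$ and $I^\dagger$ and cancel at once ($Q^\dagger Q=\mathds{1}$); in case~1 they sit on the free domain legs, and one first uses $I^\dagger I=\lambda\,\mathds{1}$ to bring them next to each other, whereupon they cancel as well ($P^2=\mathds{1}$), returning $\lambda\,\mathds{1}$. I do not expect a genuine obstacle here: the whole content is that a unitary acting on the input or the output of an isometry preserves the isometry relation. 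The only points needing attention are keeping track of which legs the permutation acts on and, in the graphical version, the sign convention for crossings in the binor calculus --- but since crossings occur only in cancelling pairs, the signs are immaterial.
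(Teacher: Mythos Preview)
Your proof is correct and is essentially the algebraic transcription of the paper's diagrammatic argument: the paper draws the two crossings and lets them cancel graphically, while you write this as $\tilde I=IP$ (resp.\ $\tilde I=QI$) and use $P^\dagger P=\mathds{1}$ (resp.\ $Q^\dagger Q=\mathds{1}$). Your remark that the binor sign convention is immaterial because crossings occur in pairs is exactly the content of the paper's pictures, and your observation that the argument already handles non-adjacent pairs without iterating is a small bonus.
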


\subsection{Non-Trivial Repartitions}
Besides the exchange of neighboring indices on the same side of the bipartition, we are left with one last repartition from nearest-neighbor permutations. This last repartition, however, has a non-trivial effect on the solution of \eqref{eq:master}, i.e. additional equations arise if we require the repartitioned tensor to also be a partial isometry. Before we study this last repartition, we point out a helpful detail arising from the fact that neighboring permutations are involutions
\begin{prop}
\label{prop:involution}
    Let $I$ be an invariant tensor that satisfies $I^\dagger I = \lambda \mathds{1}$ for some $\lambda \in \mathds{R}$ and an implicit bipartition $A, B$. Let further $\tilde I$ be a repartition of $I$ such that 
    \begin{align}
        \tilde I = \sum_{j,k} \tilde c_{j,k} \curlyvee_{j,k} = \sum_{j,k,j',k'} \tensor{P}{_{j,k}^{j',k'}} c_{j',k'} \curlyvee_{j,k}
    \end{align}
    with some involution $P$ that satisfies $P^2 = \mathds{1}$. If $\tilde I$ is a partial symmetry with $\tilde I^\dagger \tilde I = \tilde \lambda \mathds{1}$, then $\lambda = \tilde \lambda$.
    \begin{proof}
        For the sake of clarity, we introduce a combined index $\mu := (j,k)$, such that
        \begin{align}
            \tilde c_\mu \curlyvee^\mu = \tensor{P}{_\mu^\nu} c_\nu \curlyvee^\mu
        \end{align}
        Since $P$ is an involution, its eigenvalues are $\pm 1$. We can thus describe the action of $P$ on its normalized eigenvectors $\ket{p^i} = \tensor{d}{^{-1}^i_\mu} \curlyvee^\mu$
        \begin{align}
            P \left( \sum_{\mu} c_\mu ||\curlyvee^\mu|| \frac{\curlyvee^\mu}{||\curlyvee^\mu||} \right) = P \left( \sum_{\mu, i} c_\mu ||\curlyvee^\mu|| \tensor{d}{^{\mu}_i} \ket{p^i} \right) =: P (a_i\ket{p^i}) = \sum_i a_i (-1)^{x_{p_i}} \ket{p^i}
            \label{eq:involution}
        \end{align}
        where $x_{p_i} \in \{0, 1\}$ and denotes whether the $i^{th}$ eigenvalue is $+1$ or $-1$. To prove the statement, we compare the norms of $I$ and $\tilde I$. From the partial isometry property, we know that $\braket{I|I} = \Tr(I^\dagger I) = \Tr(\lambda \mathds{1}) = \lambda d$ with $d$ being the dimension of the state space on which $I^\dagger I$ is acting. Analogously, we get $\braket{\tilde I | \tilde I } = \tilde \lambda d$. We can relate the norm of $\tilde I$ to $\lambda$ using \eqref{eq:involution}
        \begin{align}
            \braket{ I | I} &= \sum_\mu |c_\mu|^2 ||\curlyvee^\mu||^2 = \sum_i |a_i|^2 \\
            \braket{\tilde I | \tilde I} &= \Tr\left( P \left( \sum_{\mu} c_\mu ||\curlyvee^\mu|| \frac{\curlyvee^\mu}{||\curlyvee^\mu||} \right)^\dagger P \left( \sum_{\mu} c_\mu ||\curlyvee^\mu|| \frac{\curlyvee^\mu}{||\curlyvee^\mu||} \right) \right) = \sum_{i} |a_i (-1)^{x_{p_i}}|^2 = \sum_{i} |a_i|^2 
        \end{align}
    \end{proof}
    This implies $\tilde \lambda = \lambda$.
\end{prop}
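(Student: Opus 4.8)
The claim is at bottom a statement about Hilbert--Schmidt norms: I would show that the repartition leaves $\norm{\,\cdot\,}$ unchanged and then extract $\lambda=\tilde\lambda$ from it. Since $I^\dagger I=\lambda\mathds{1}$ acts on the domain of $I$, whose dimension $d$ is unchanged by the repartition (for qubit indices every factor has dimension $2$, so $d=2^{n}$ regardless of the chosen bipartition), one has $\norm{I}^{2}=\Tr(I^\dagger I)=\lambda\,d$ and likewise $\norm{\tilde I}^{2}=\tilde\lambda\,d$. Hence it suffices to prove $\norm{\tilde I}^{2}=\norm{I}^{2}$.

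To do that I would work in an orthonormal basis adapted to $P$. By the orthogonality of bridge states, with $\norm{\curlyvee_{(j,k)}}^{2}={}_{(j,k)}\Theta_{(j,k)}$, and writing $\mu=(j,k)$ and $\tilde c_\mu=\tensor{P}{_\mu^\nu}c_\nu$, one has $\norm{I}^{2}=\sum_\mu|c_\mu|^{2}\norm{\curlyvee^\mu}^{2}$ and $\norm{\tilde I}^{2}=\sum_\mu|\tilde c_\mu|^{2}\norm{\curlyvee^\mu}^{2}$. Passing to the normalised bridge states $\hat\curlyvee^\mu:=\curlyvee^\mu/\norm{\curlyvee^\mu}$, the operator induced on the invariant subspace by the repartition is \emph{Hermitian}: it is the restriction to the $SU(2)$--invariant subspace of a permutation of the tensor factors, which is unitary and, since the relevant factors carry the same spin, commutes with the diagonal $SU(2)$ action; and it squares to $\mathds{1}$ because the underlying transposition of neighbours does. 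A Hermitian involution has an orthonormal eigenbasis $\{\ket{p^{i}}\}$ with eigenvalues $\pm1$; expanding $I=\sum_i a_i\ket{p^{i}}$ gives $\tilde I=\sum_i(\pm1)\,a_i\ket{p^{i}}$, and since the change of basis from $\{\hat\curlyvee^\mu\}$ to $\{\ket{p^{i}}\}$ is unitary,
\begin{align}
    \norm{\tilde I}^{2}=\sum_i|a_i|^{2}=\sum_\mu|c_\mu|^{2}\norm{\curlyvee^\mu}^{2}=\norm{I}^{2}.
\end{align}
Together with the first step this gives $\tilde\lambda\,d=\lambda\,d$, hence $\lambda=\tilde\lambda$.

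The delicate point — and the only place I expect genuine care to be needed — is the passage from the raw coefficient matrix $P$ to an honest norm-preserving operator: $P^{2}=\mathds{1}$ by itself does \emph{not} imply $\sum_\mu|\tilde c_\mu|^{2}\norm{\curlyvee^\mu}^{2}=\sum_\mu|c_\mu|^{2}\norm{\curlyvee^\mu}^{2}$, because the bridge states are not normalised. One first has to rescale the coefficients by the theta-network norms $\sqrt{{}_{(j,k)}\Theta_{(j,k)}}$ — equivalently, invoke that the repartition is unitary on the invariant subspace, not merely that $P$ is an involution — before the $\pm1$-eigenvalue bookkeeping applies. Once that is in place, everything else is immediate from the orthogonality of bridge states and the trace identity $\Tr(I^\dagger I)=\lambda\,d$, both already established; this is exactly the consistency one needs so that $\lambda$ does not depend on the chosen bipartition in Algorithm~\ref{alg:IPT}.
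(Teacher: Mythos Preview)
Your approach is essentially the same as the paper's: compare Hilbert--Schmidt norms via $\Tr(I^\dagger I)=\lambda d$ and $\Tr(\tilde I^\dagger\tilde I)=\tilde\lambda d$, expand in an eigenbasis of the involution with eigenvalues $\pm1$, and observe that the signs drop out of $\sum_i|a_i|^2$. The paper does exactly this, passing to normalised bridge states and then to eigenvectors $\ket{p^i}$ of $P$.

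Where you go slightly beyond the paper is in flagging that $P^2=\mathds{1}$ on the \emph{coefficient} vector is not by itself enough: one needs the induced operator on the \emph{normalised} bridge-state basis to be Hermitian (equivalently unitary), so that the eigenbasis $\{\ket{p^i}\}$ is orthonormal and the equality $\sum_\mu|c_\mu|^2\norm{\curlyvee^\mu}^2=\sum_i|a_i|^2$ holds. The paper tacitly assumes this when it writes the eigenvector expansion and equates norms; you supply the missing justification by noting that the repartition is the restriction to the invariant subspace of a transposition of tensor factors, hence unitary and self-adjoint. That is a genuine clarification, but not a different argument.
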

\begin{prop}
\label{prop:non_triv_repart}
    Let $I$ be an invariant tensor that satisfies $I^\dagger I = \lambda \mathds{1}$ for some $\lambda \in \mathds{R}$ and an implicit bipartition $A, B$. We define the tensor $\tilde I$ which arises from changing the last indices in $A$ and $B$, respectively, i.e.
    \begin{align}
        \tilde I = \bipartstate{$I$}
    \end{align}
     $\tilde I$ is also a partial isometry, if and only if 
    \begin{align}
        \bipartproduct{$I^\dagger$}{$I$} - \bipartproductII{$I^\dagger$}{$I$} - \bipartproductIII{$I^\dagger$}{$I$} = 0
        \label{eq:non_triv_repart}
    \end{align}
    \begin{proof}
        The statement results from the isometry condition 
        \begin{align}
            &\tilde I^\dagger \tilde I \stackrel{!}{=} \tilde \lambda \mathds{1} \nonumber \\
            \iff & \bipartproductIV{$I^\dagger$}{$I$} = \tilde \lambda \identities \nonumber \\
            \iff &\invtensor{$I^\dagger$}{$I$} + \bipartproduct{$I^\dagger$}{$I$} - \bipartproductII{$I^\dagger$}{$I$} - \bipartproductIII{$I^\dagger$}{$I$} = \tilde \lambda \identities
            \label{eq:non_triv_repart_proof}
        \end{align}
        with $\tilde \lambda \in \mathds{R}$ and where we used the binor identity to get rid of the crossings. From Lemma \ref{prop:involution} we know that $\lambda = \tilde \lambda$. As we know that $I$ is a partial isometry, we can cancel the first term of the left hand side with the right hand side of \eqref{eq:non_triv_repart_proof} and deduce the statement.
    \end{proof}
\end{prop}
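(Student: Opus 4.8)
The plan is to turn the partial-isometry condition for $\tilde I$ into a graphical identity in the binor calculus and then peel off the crossing that the repartition introduces. The tensor $\tilde I$ differs from $I$ only by transposing the last index of the domain with the last index of the image, i.e.\ the two legs meeting at the ends of the bridge. Contracting $\tilde I^\dagger$ with $\tilde I$ along the image legs therefore produces the tensor network shown on the first line of \eqref{eq:non_triv_repart_proof}, in which the bridge strands are rerouted so that a crossing appears, and the isometry requirement is that this network equals $\tilde\lambda\,\mathds{1}$ for some $\tilde\lambda\in\mathds{R}$.

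I would then resolve that crossing with the paper's $i$-free binor identity, in which a crossing is replaced by a cup-cap (turn-back) configuration minus a straight-through pair. The straight-through channel reconnects the bridge strands directly and so reconstructs the diagram for $I^\dagger I$; the turn-back channel produces a diagram whose remaining internal crossings are resolved by the same rule, contributing the other two correction terms. Collecting all contributions gives exactly the second line of \eqref{eq:non_triv_repart_proof}: the diagram for $I^\dagger I$ together with the three diagrams appearing in \eqref{eq:non_triv_repart} equals $\tilde\lambda\,\mathds{1}$.

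Finally I would fix the constant. The repartition $P^*$ swaps a single pair of neighbouring indices, hence is an involution, $P^{*2}=\mathds{1}$, so Lemma \ref{prop:involution} applies and forces $\tilde\lambda=\lambda$. Using the hypothesis $I^\dagger I=\lambda\,\mathds{1}$ I cancel the $I^\dagger I$ term on the left of \eqref{eq:non_triv_repart_proof} against $\lambda\,\mathds{1}$ on the right, which leaves precisely \eqref{eq:non_triv_repart}; running the chain of equalities backwards gives the converse, since \eqref{eq:non_triv_repart} together with $I^\dagger I=\lambda\,\mathds{1}$ yields $\tilde I^\dagger\tilde I=\lambda\,\mathds{1}$, so $\tilde I$ is a partial isometry.

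The main obstacle is diagrammatic bookkeeping rather than conceptual difficulty: one must be careful to identify exactly which crossings the transposition produces and how the $i$-free convention distributes signs, so that the straight-through channel really contributes a clean $I^\dagger I$ and the turn-back channel delivers the two minus signs of \eqref{eq:non_triv_repart}. The appeal to Lemma \ref{prop:involution} is essential: without $\tilde\lambda=\lambda$ one could only conclude that the left-hand side of \eqref{eq:non_triv_repart} equals $(\tilde\lambda-\lambda)\,\mathds{1}$ for an a priori different $\tilde\lambda$, which is strictly weaker than the claimed equivalence.
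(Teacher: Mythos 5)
Your proposal is correct and follows essentially the same route as the paper: write $\tilde I^\dagger\tilde I=\tilde\lambda\mathds{1}$ graphically, expand the two crossings introduced by the repartition via the $i$-free binor identity (crossing $=$ cup--cap $-$ straight-through) to recover $I^\dagger I$ plus the three correction diagrams of \eqref{eq:non_triv_repart}, invoke Lemma~\ref{prop:involution} to identify $\tilde\lambda=\lambda$, and then cancel $I^\dagger I$ against $\lambda\mathds{1}$; your remark that without the involution lemma one would only obtain the weaker identity with $(\tilde\lambda-\lambda)\mathds{1}$ on the right is a correct and worthwhile observation about where the lemma enters.
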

Note that we necessarily pull indices with the Levi-Civita tensor $\epsilon$ in order to use the binor identity in \eqref{eq:non_triv_repart_proof}. It can be straight-forwardly shown that the isometry properties of $I$ and $\epsilon I \epsilon^\dagger$ are equivalent which allows us to do so. 

In general, the condition \eqref{eq:non_triv_repart} turns out to be difficult to further break down in general. However, it can always be evaluated by decomposing the remaining terms again into bridge states. The equation \eqref{eq:non_triv_repart} might put new constraints on the previously found coefficients. This procedure needs to be iterated over all combinations of trivial permutations $P_{i,j}$ preceding the critical permutation $P_{n, n+1}$. Since they do not commute, in general, new constraints might arise for the coefficients $c_{j,k}$. We illustrate this method on two examples in appendices \ref{sec:no_4v} and \ref{sec:no_6v}, respectively.

\section{No 4-valent invariant perfect tensors -- Proof II}
\label{sec:no_4v}
    \begin{proof}[Second Proof of Theorem \ref{theo:no_4v_I}]
        We proceed by contradiction. Let us assume there would exist a 4-valent invariant perfect tensor $I$ with spin $\frac{1}{2}$. The bridge decomposition of $I$ reads
        \begin{align}
            I = \sum_{j,k} c_{j,k} \curlyvee_{(j,k)}
        \end{align}
        For spin $\frac{1}{2}$ and a fixed bipartition, there are only two basis states with $j = k = \left( \frac{1}{2}, 0 \right)$ and $j = k = \left( \frac{1}{2}, 1 \right)$, which we will denote $\curlyvee_0$ and $\curlyvee_1$. By Theorem \ref{theo:master}, the coefficients have to satisfy the set of equations
        \begin{align}
            |c_1|^2 = \lambda \qquad &\iff \qquad c_1 = \sqrt{\lambda} e^{i\phi_1} \\
            |c_0|^2 = \frac{\lambda}{4} \qquad &\iff \qquad c_0 = \frac{\sqrt{\lambda}}{2} e^{i\phi_0}.
        \end{align}
        If we do a repartition changing the neighboring indices across the bridge, we map $I$ to $\tilde I$ which we can write as
        \begin{align}
            \tilde I &= \sum_{j \in \{0,1\}} \tilde c_j \curlyvee_j := \sum_{j \in \{0,1\}} c_j \fourvrepart{$j$} = \sum_{j \in \{0,1\}} c_j \left( \curlyvee_j - \fourvloop{$j$} \right) \nonumber \\
            &= c_0 \left( \curlyvee_0 - \mathds{1} \right) + c_1 \left( \curlyvee_1 - \frac{3}{2} \mathds{1} \right) \label{eq:first_repart}
        \end{align}
        We can insert the bridge state decomposition of $\mathds{1} = \frac{1}{2} \curlyvee_0 + \curlyvee_1$ (cf. Lemma \ref{prop:1_dec}) to read off the new coefficients $\tilde c_j$
        \begin{align}
            \tilde c_0 = \frac{1}{2} c_0 - \frac{3}{4} c_1 \qquad \qquad \tilde c_1 = -c_0 - \frac{1}{2} c_1 \label{eq:non_triv_trafo}
        \end{align}
        where we used that $_{(\frac{1}{2}, 1)}\Theta_{(\frac{1}{2}, 1)} = _{(\frac{1}{2}, 1, \frac{1}{2})}\Theta_{(\frac{1}{2}, 1, \frac{1}{2})} = 3$, $_{(\frac{1}{2}, 0)}\Theta_{(\frac{1}{2}, 0)} = 2$ and $_{(\frac{1}{2}, 0, \frac{1}{2})}\Theta_{(\frac{1}{2}, 0, \frac{1}{2})} = 4$. The twiddled coefficients $\tilde c_j$ have to satisfy the same equations as before 
        \begin{align}
            |\tilde c_1|^2 &= \tilde \lambda = |c_0|^2 + \frac{1}{4} |c_1|^2 + \Real(\overline{c_0} c_1) = \frac{\lambda}{2} \left( 1 + \cos(\phi_1 - \phi_0) \right) \nonumber \\
            &\iff \frac{\tilde \lambda}{\lambda} = \frac{1}{2} + \frac{1}{2}\cos(\phi_1 - \phi_0) \label{eq:v4_condI} \\
            |\tilde c_0|^2 &= \frac{\tilde \lambda}{4} = \frac{1}{4} |c_0|^2 + \frac{9}{16} |c_1|^2 - \frac{3}{4} \Real(\overline{c_0} c_1) = \frac{\lambda}{8} \left( 5 - 3\cos(\phi_1 - \phi_0) \right) \nonumber \\
            &\iff \frac{\tilde \lambda}{\lambda} = \frac{5}{2} - \frac{3}{2} \cos(\phi_1 - \phi_0),
            \label{eq:v4_condII}
        \end{align}
        With the equations \eqref{eq:v4_condI} and \eqref{eq:v4_condII}, we can solve for $\frac{\tilde \lambda}{\lambda}$ as well as $\Delta \phi := \phi_1 - \phi_0$
        \begin{align}
            1 + \cos(\Delta\phi) = 5 - 3 \cos(\Delta\phi) \qquad \iff \qquad \Delta\phi = 0 \mod 2\pi
            \label{eq:resultI}
        \end{align}
        which implies $\frac{\tilde \lambda}{\lambda} = 1$ and is consistent with Lemma \ref{prop:involution}. Lastly, we consider a repartition that swaps the two indices on the right of bridge state of $I$, then performs the repartition as in \eqref{eq:first_repart} and finally swaps the two right indices again. Since the neighboring permutations do not commute, we get a non-trivial new repartition $\tilde{ \tilde I}$ via
        \begin{align}
            \tilde{ \tilde I} &= \sum_{j \in \{0,1\}} \tilde{\tilde{c_j}} \curlyvee_j = \sum_{j \in \{0,1\}} c_j \fourvrepartII{$j$}. 
            \label{eq:second_repart}
        \end{align}
        We could again use the binor identity to determine the coefficients $\tilde{\tilde{c_j}}$. Instead, we just use the knowledge from \eqref{eq:non_triv_trafo} and the transformation
        \begin{align}
            c_{new} = 
            \begin{pmatrix} 
                -1 & 0 \\
                0 & 1
            \end{pmatrix}
            c_{old}
        \end{align}
        coming from the trivial repartition which changes the two indices on the right. This can be straight-forwardly shown by using the binor identity. Finally, the new coefficients can be calulated by composing the basis transformations accordingly
        \begin{align}
            \tilde{\tilde{c_j}} = 
            \tensor{ \left[ \begin{pmatrix} 
                -1 & 0 \\
                0 & 1
            \end{pmatrix}
            \begin{pmatrix} 
                \frac{1}{2} & -\frac{3}{4} \\
                -1 & -\frac{1}{2}
            \end{pmatrix}
            \begin{pmatrix} 
                -1 & 0 \\
                0 & 1
            \end{pmatrix} \right]}{_j ^k}
            c_k \qquad \implies \tilde{\tilde{c_0}} = \frac{1}{2} c_0 + \frac{3}{4} c_1 \qquad \tilde{\tilde{c_1}} = c_0 - \frac{1}{2} c_1
        \end{align}
        Following the same steps as in (\ref{eq:v4_condI} - \ref{eq:resultI}), we can again fix the relative phase between $c_0$ and $c_1$
        \begin{align}
            1 - \cos(\Delta\phi) = 5 + 3 \cos(\Delta\phi) \qquad \iff \qquad \Delta\phi = \pi \mod 2\pi
        \end{align}
        which is a contradiction with \eqref{eq:resultI}.
    \end{proof}

\section{No 6-valent invariant perfect tensors}
\label{sec:no_6v}
\begin{proof}[Proof of Theorem \ref{theo:no_6v}]
        We proceed by contradiction. Let us assume there would exist a 6-valent invariant perfect tensor $I$ with spin $\frac{1}{2}$. The bridge decomposition of $I$ reads
        \begin{align}
            I = \sum_{j,k} c_{j,k} \curlyvee_{(j,k)},
        \end{align}
        where we reduce the discussion without loss of generality to balanced bridge states (cf. \ref{prop:trvial_I}). The basis states are then drawn from the index set 
        \begin{align}
            (j,k) = (j_2, j_3=k_3, k_2) \in \Bigg\{\left(1, \frac{3}{2}, 1\right), \left(1, \frac{1}{2}, 1\right), \left(0, \frac{1}{2}, 1\right), \left(1, \frac{1}{2}, 0\right), \left(0, \frac{1}{2}, 0\right) \Bigg\}.
            \label{eq:v6_indices}
        \end{align}
        Note that we omitted again the first spin which is always $j_1 = k_1 = \frac{1}{2}$. We can again employ the master equation \eqref{eq:master} for every possible combination $j, j'$ in \eqref{eq:v6_indices}. Starting with $j = j' = j_{max}$, we can recover the result of Corollary \ref{cor:max_coeff}
        \begin{align}
            |c_{j_{max}, j_{max}}|^2 = \lambda \qquad \implies c_{j_{max}, j_{max}} = \sqrt{\lambda} e^{i \phi} 
            \label{eq:v6_eq1}
        \end{align}
        The other two symmetric equations $j=j' = \left( \frac{1}{2}, 1, \frac{1}{2} \right)$ and $j = j' = \left( \frac{1}{2}, 0, \frac{1}{2} \right)$ read
        \begin{align}
            \frac{9}{4} |c_{(1,\frac{1}{2}, 1)}|^2 + 3|c_{(1,\frac{1}{2}, 0)}|^2 &= \lambda \\
            4 |c_{(0,\frac{1}{2}, 0)}|^2 + 3 |c_{(0,\frac{1}{2}, 1)}|^2 &= \lambda
        \end{align}
        using ${}_{(\frac{1}{2}, 1, \frac{1}{2})}\Theta_{(\frac{1}{2}, 1, \frac{1}{2})} = 3$ and ${}_{(\frac{1}{2}, 0, \frac{1}{2})}\Theta_{(\frac{1}{2}, 0, \frac{1}{2})} = 4$. The last non-trivial equation comes from \eqref{eq:master} by plugging in $j = \left( \frac{1}{2}, 0, \frac{1}{2} \right)$ and $j' = \left( \frac{1}{2}, 1, \frac{1}{2} \right)$
        \begin{align}
            4 \, c_{(0,\frac{1}{2}, 0)} \overline{ c_{(1,\frac{1}{2}, 0)} } + 3 \, c_{(0,\frac{1}{2}, 1)} \overline{ c_{(1,\frac{1}{2}, 1)} } = 0
            \label{eq:v6_eq4}
        \end{align}
        As we only have 4 equations, but 5 coefficients to be determined, we can leave one of the coefficients as a free parameter. We will choose $c_{(1,\frac{1}{2},1)} = A e^{i \rho}$, with $A \in \mathds{R}^+, \rho \in [0, 2 \pi]$. Solving equations (\ref{eq:v6_eq1} - \ref{eq:v6_eq4}), we get
        \begin{align}
            c_{(1,\frac{3}{2}, 1)} &= \sqrt{\lambda} e^{i \phi} \label{eq:parameterI} \\
            c_{(1,\frac{1}{2}, 1)} &= A e^{i \rho} \label{eq:parameterII} \\
            c_{(1,\frac{1}{2}, 0)} &= \sqrt{\frac{\lambda}{3} - \frac{3}{4} A^2 } \, e^{i \chi} \label{eq:parameterIII} \\
            c_{(0,\frac{1}{2}, 0)} &= \frac{3}{4} A e^{i\psi} \label{eq:parameterIV} \\
            c_{(0,\frac{1}{2}, 1)} &= \sqrt{\frac{\lambda}{3} - \frac{3}{4} A^2} e^{i(\rho + \psi - \chi + \pi)}
            \label{eq:parameterV}
        \end{align}
        Before considering any repartition, we have found a subspace of invariant tensors $\mathcal{I}(A, \phi, \rho, \psi, \chi)$ as suitable candidates for perfect tensors parameterized by one amplitude $A$ and four phases $\phi, \rho, \chi, \psi$. If there was a 6-valent invariant perfect tensor, we would find a subspace $\varnothing \neq \mathcal{I}' \subset \mathcal{I}$ that would be stable with respect to all possible repartitions. We will show that necessarily $\mathcal{I}' = \varnothing$. Let us start with the repartition $P^*$ exchanging the indices across the bridge. The new coefficients $\tilde c_{j,k}$ defined by the decomposition into bridge states
        \begin{align}
            \sixvrepart{$j$}{$k$} = \sum_{j,k} \tilde c_{j,k} \curlyvee_{j,k},
            \label{eq:v6_repart}
        \end{align}
        can be found analogously to the second proof of Theorem \ref{theo:no_4v_I}. The basis transformation reads
        \begin{align}
            \begin{pmatrix}
                \tilde c_{(1,\frac{3}{2},1)} \\
                \tilde c_{(1,\frac{1}{2},1)} \\
                \tilde c_{(1,\frac{1}{2},0)} \\
                \tilde c_{(0,\frac{1}{2},1)} \\
                \tilde c_{(0,\frac{1}{2},0)}
            \end{pmatrix}
            =
            \begin{pmatrix}
                -\frac{1}{3} & -1 & 0 & 0 & 0 \\
                -\frac{8}{9} & \frac{1}{3} & 0 & 0 & 0 \\
                0 & 0 & 1 & 0 & 0 \\
                0 & 0 & 0 & 1 & 0 \\
                0 & 0 & 0 & 0 & -1
            \end{pmatrix}
            \begin{pmatrix}
                c_{(1,\frac{3}{2},1)} \\
                c_{(1,\frac{1}{2},1)} \\
                c_{(1,\frac{1}{2},0)} \\
                c_{(0,\frac{1}{2},1)} \\
                c_{(0,\frac{1}{2},0)}
            \end{pmatrix}
        \end{align}
        Since the master equation does not see which partition we are currently looking at, the same parameterization (\ref{eq:parameterI} - \ref{eq:parameterV}) holds for the twiddled coefficients $\tilde c_{j,k}$. We will now go through every one of them and apply the additional conditions. Starting with (\ref{eq:parameterIV}, \ref{eq:parameterIII}, \ref{eq:parameterV}), we get
        \begin{align}
            \tilde c_{(0, \frac{1}{2}, 0)} = \frac{3}{4} \tilde A e^{i \tilde \psi} = - c_{(0, \frac{1}{2}, 0)} = -\frac{3}{4} A e^{i \psi} \qquad &\iff \tilde A = A, \quad \tilde \psi = \psi + \pi \\
            \tilde c_{(1, \frac{1}{2}, 0)} = \sqrt{\frac{\lambda}{3} - \frac{3}{4} A^2 } \, e^{i \chi} = c_{(1, \frac{1}{2}, 0)} = \sqrt{\frac{\tilde \lambda}{3} - \frac{3}{4} \tilde A^2 } \, e^{i \tilde \chi} \qquad &\iff \tilde \lambda = \lambda, \quad \tilde \chi = \chi \\
            \tilde c_{(0, \frac{1}{2}, 1)} = \sqrt{\frac{\lambda}{3} - \frac{3}{4} A^2} e^{i(\rho + \psi - \chi + \pi)} = c_{(0, \frac{1}{2}, 1)} = \sqrt{\frac{\tilde \lambda}{3} - \frac{3}{4} \tilde A^2} e^{i(\tilde \rho + \tilde \psi - \tilde \chi + \pi)} \qquad &\iff \tilde \rho = \rho + \pi
        \end{align}
        This fixes all new parameters except $\tilde \phi$. Next, we use \eqref{eq:parameterI}
        \begin{align}
            |\tilde c_{(1, \frac{3}{2}, 1)}|^2 &= \left( - \frac{1}{3} \overline{c_{(1, \frac{3}{2}, 1)}} - \overline{c_{(1, \frac{1}{2}, 1)}} \right)\left( - \frac{1}{3} c_{(1, \frac{3}{2}, 1)} - c_{(1, \frac{1}{2}, 1)} \right) = \frac{1}{9} \lambda + A^2 + \frac{2}{3} A \sqrt{\lambda} \cos(\phi - \rho) = \lambda \nonumber \\
            \iff 0 &= 9 A^2 + 6 A \sqrt{\lambda} \cos(\phi - \rho) -8 \lambda \nonumber \\
            \iff A &= \frac{\sqrt{\lambda}}{3} \left( \sqrt{ \cos^2(\phi - \rho) + 8 } - \cos(\phi - \rho) \right),
            \label{eq:A_amplitude}
        \end{align}
        which fixes $A$. We can also fix $\tilde \phi$ by comparing the phases
        \begin{align}
            \sqrt{\lambda} \cos\left( \tilde \phi \right) = -\frac{\sqrt{\lambda}}{3} \cos(\phi) - A \cos(\rho)
            \label{eq:phi_phase}
        \end{align}
        Lastly, we use \eqref{eq:parameterII} and the results from before
        \begin{align}
            \tilde c_{(1, \frac{1}{2}, 1)} &= A e^{i(\rho+\pi)} = -\frac{8}{9} c_{(1, \frac{3}{2}, 1)} + \frac{1}{3} c_{(1, \frac{1}{2}, 1)} = -\frac{8}{9} \sqrt{\lambda} e^{i \phi} + \frac{1}{3} A e^{i \rho} \nonumber \\
            \iff A e^{i \rho} &= \frac{2}{3} \sqrt{\lambda} e^{i \phi},
        \end{align}
        which yields $A = \frac{2}{3} \sqrt{\lambda}$ and $\rho = \phi$, which is consistent with \eqref{eq:A_amplitude}. Also, we find $\tilde \phi = \phi + \pi$ from \eqref{eq:phi_phase}. Summing up, the repartition \eqref{eq:v6_repart} relates all the new degrees of freedoms with respect to the old ones and additionally fixes the formerly free amplitude $A$ and one of the phases $\rho = \phi$. We get the updated coefficients
        \begin{align}
            c_{(1,\frac{3}{2}, 1)} &= \sqrt{\lambda} e^{i \phi} \qquad
            c_{(1,\frac{1}{2}, 1)} = \frac{2}{3} \sqrt{\lambda} e^{i \phi} \qquad c_{(0,\frac{1}{2}, 0)} = \frac{\sqrt{\lambda}}{2} e^{i\psi} \\
            c_{(1,\frac{1}{2}, 0)} &= 0 \qquad
            c_{(0,\frac{1}{2}, 1)} = 0
        \end{align}
        The last repartition, we consider is inspired by the critical repartition \eqref{eq:second_repart} from the proof for 4-valent tensors, which first exchanges the two bottom indices on the right side, then the two indices across the bridge and finally exchange the same two indices on the right hand side again. The transformation matrix for the coefficients reads
        \begin{align}
            \begin{pmatrix}
                \tilde{\tilde{c}}_{(1,\frac{3}{2},1)} \\
                \tilde{\tilde{c}}_{(1,\frac{1}{2},1)} \\
                \tilde{\tilde{c}}_{(1,\frac{1}{2},0)} \\
                \tilde{\tilde{c}}_{(0,\frac{1}{2},1)} \\
                \tilde{\tilde{c}}_{(0,\frac{1}{2},0)}
            \end{pmatrix}
            &=
            \begin{pmatrix}
                1 & 0 & 0 & 0 & 0 \\
                0 & -\frac{1}{2} & -1 & 0 & 0 \\
                0 & -\frac{3}{4} & \frac{1}{2} & 0 & 0 \\
                0 & 0 & 0 & -\frac{1}{2} & -1 \\
                0 & 0 & 0 & -\frac{3}{4} & \frac{1}{2}
            \end{pmatrix}
            \begin{pmatrix}
                -\frac{1}{3} & -1 & 0 & 0 & 0 \\
                -\frac{8}{9} & \frac{1}{3} & 0 & 0 & 0 \\
                0 & 0 & 1 & 0 & 0 \\
                0 & 0 & 0 & 1 & 0 \\
                0 & 0 & 0 & 0 & -1
            \end{pmatrix}
            \begin{pmatrix}
                1 & 0 & 0 & 0 & 0 \\
                0 & -\frac{1}{2} & -1 & 0 & 0 \\
                0 & -\frac{3}{4} & \frac{1}{2} & 0 & 0 \\
                0 & 0 & 0 & -\frac{1}{2} & -1 \\
                0 & 0 & 0 & -\frac{3}{4} & \frac{1}{2}
            \end{pmatrix}
            \begin{pmatrix}
                c_{(1,\frac{3}{2},1)} \\
                c_{(1,\frac{1}{2},1)} \\
                c_{(1,\frac{1}{2},0)} \\
                c_{(0,\frac{1}{2},1)} \\
                c_{(0,\frac{1}{2},0)}
            \end{pmatrix} \nonumber \\
            &= 
            \begin{pmatrix}
                -\frac{1}{3} & \frac{1}{2} & 1 & 0 & 0 \\
                \frac{4}{9} & -\frac{1}{6} & -1 & 0 & 0 \\
                \frac{2}{3} & -\frac{1}{4} & \frac{1}{2} & 0 & 0 \\
                0 & 0 & 0 & -\frac{1}{2} & 1 \\
                0 & 0 & 0 & -\frac{3}{4} & -\frac{1}{2}
            \end{pmatrix}
            \begin{pmatrix}
                c_{(1,\frac{3}{2},1)} \\
                c_{(1,\frac{1}{2},1)} \\
                c_{(1,\frac{1}{2},0)} \\
                c_{(0,\frac{1}{2},1)} \\
                c_{(0,\frac{1}{2},0)}
            \end{pmatrix},
        \end{align}
        which generates a contradiction, for instance if we regard $0=\tilde{\tilde{c}}_{(0, \frac{1}{2}, 1)} = c_{(0, \frac{1}{2}, 0)} = \frac{\sqrt{\lambda}}{2} e^{i\psi}$. This would imply $\tilde{\tilde{\lambda}} = 0$, which is a contradiction to the perfectness assumption.
    \end{proof}

\end{document}